\newtheorem{theorem}{Theorem}[section]
\newtheorem{example}{Example}
\newtheorem{problem}[theorem]{Problem}
\newtheorem{conjecture}[theorem]{Conjecture}
  \newtheorem{observation}[theorem]{Observation}
  \newtheorem{proposition}[theorem]{Proposition}
\newtheorem{defin}[theorem]{Definition}
\newtheorem{lemma}[theorem]{Lemma}
\newtheorem{corollary}[theorem]{Corollary}
\newtheorem{nt}{Note}
\begin{document}
\parindent 0cm
\parskip .3cm
\thispagestyle{empty}
\begin{center}
\Large
 {\bf Almost every graph is divergent under the biclique operator}
\vfill

\large
Marina Groshaus \footnote{Partially
  supported by UBACyT grant  20020100100754, PICT ANPCyT grant 2010-1970,
  CONICET PIP grant 11220100100310}\footnote{Partially
  supported by Math-Amsud project 14 Math 06}\\
 \normalsize
Universidad de Buenos Aires \\
Departamento de Computaci\'on \\
groshaus@dc.uba.ar \\
\vfill

\large
Andr\'{e} L.P. Guedes\footnotemark[2]\\
 \normalsize
Universidad de Buenos Aires / Universidade Federal do Paran\'{a} \\
Departamento de Computaci\'on / Departamento de Inform\'{a}tica \\
andre@inf.ufpr.br \\
\vfill

\large
Leandro Montero \\
 \normalsize
Universidad de Buenos Aires / Universit\'e Paris-Sud \\
Departamento de Computaci\'on / Laboratoire de Recherche en Informatique \\
lmontero@\{dc.uba.ar/lri.fr\} \\

\normalsize 
\vspace{.5cm}

ABSTRACT
\end{center}

\small A biclique of a graph $G$ is a maximal induced complete bipartite subgraph of $G$. The biclique graph of $G$ denoted by $KB(G)$, is the intersection 
graph of all the bicliques of $G$. The biclique graph can be thought as an operator between the class of all graphs. The iterated biclique graph of $G$ 
denoted by $KB^{k}(G)$, is the graph obtained by applying the biclique operator $k$ successive times to $G$. 
The associated problem is deciding whether an input graph converges, diverges or is periodic under the biclique operator when $k$ grows to infinity. 
All possible behaviors were characterized recently and an $O(n^4)$ algorithm for deciding the behavior of any graph under the biclique operator was also 
given. In this work we prove new structural results of biclique graphs. In particular, we prove that every false-twin-free graph with at least 
$13$ vertices is divergent. These results lead to a linear time algorithm to solve the same problem.

\normalsize

{\bf Keywords:} Bicliques; Biclique graphs; False-twin-free graphs; Iterated graph operators; Graph dynamics
\newpage

\section{Introduction}
Intersection graphs of certain special subgraphs of a general graph have been studied 
extensively. For example, line graphs (intersection graphs of the edges of a graph), 
interval graphs (intersection of intervals of the real line), clique graphs (intersection of cliques of a graph), etc 
\cite{BoothLuekerJCSS1976,BrandstadtLeSpinrad1999,EscalanteAMSUH1973,FulkersonGrossPJM1965, GavrilJCTSB1974, LehotJA1974,McKeeMcMorris1999}.

The \emph{clique graph} of $G$ denoted by $K(G)$, is the intersection graph of the family of all maximal cliques of $G$. Clique graphs were introduced by Hamelink in \cite{HamelinkJCT1968} and characterized by Roberts and Spencer in  \cite{RobertsSpencerJCTSB1971}. 
The computational complexity of the recognition problem of clique graphs had been open for more that 40 years. In  \cite{Alc'onFariaFigueiredoGutierrez2006} they proved that clique graph recognition problem is NP-complete.

The clique graph can be thought as an operator between the class of all graphs. The \emph{iterated clique graph} $K^{k}(G)$ is the graph obtained by applying the clique operator $k$ successive times ($K^0(G)=G$). Then $K$ is called \emph{clique operator} and it was introduced by Hedetniemi and Slater in \cite{HedetniemiSlater1972}. 
Much work has been done on the scope of the clique operator looking at the different possible behaviors. 
The associated problem is deciding whether an input graph converges, diverges or is periodic under the clique operator when $k$ grows to infinity. 
In general it is not clear that the problem is decidable.  
However, partial characterizations have been given for convergent, divergent and periodic graphs restricted to some classes of graphs. Some of these lead to polynomial time recognition algorithms.
For the clique-Helly graph class, graphs which converge to the trivial graph have been characterized in~\cite{BandeltPrisnerJCTSB1991}.
Cographs, $P_4$-tidy graphs, and circular-arc graphs  are examples of classes where the different behaviors are 
characterized~\cite{MelloMorganaLiveraniDAM2006,LarrionMelloMorganaNeumann-LaraPizanaDM2004}. Divergent graphs were also considered. For example in 
\cite{Neumann1981}, families of divergent graphs are shown. Periodic graphs were studied in
\cite{EscalanteAMSUH1973,LarrionNeumann-LaraPizanaDM2002}. In particular it is proved that for every integer $i$, there exist periodic graphs with period $i$ and also convergent graphs which converge in $i$ steps. More results about iterated clique graph can be found in 
\cite{LarrionPizanaVillarroel-FloresDM2008,Frias-ArmentaNeumann-LaraPizanaDM2004,LarrionNeumann-LaraGC1997,LarrionNeumann-LaraDM1999,
LarrionNeumann-LaraDM2000,PizanaDM2003}.

A biclique is a maximal bipartite complete induced subgraph. Bicliques have applications in various fields, for example biology: protein-protein interaction networks~\cite{Bu01052003}, 
social networks: web community discovery~\cite{Kumar}, genetics~\cite{Atluri}, medicine~\cite{Niranjan}, information theory~\cite{Haemers200156}, etc. 
More applications (including some of these) can be found in~\cite{blablamec}.

The \emph{biclique graph} of a graph $G$ denoted by $KB(G)$, is the intersection graph of the family of all maximal bicliques of $G$. 
It was defined and characterized in~\cite{GroshausSzwarcfiterJGT2010}. However no polynomial time algorithm is known for recognizing biclique graphs. 
As for clique graphs, the biclique graph construction can be viewed as an operator between the class of graphs. 

The \emph{iterated biclique graph} $KB^{k}(G)$ is the graph obtained by applying to $G$ the biclique operator $KB$ $k$ times iteratively. It was introduced in~\cite{marinayo} and all possible behaviors were characterized.
It was proven that a graph $G$ is either divergent or convergent
but it is never periodic (with period bigger than $1$). In addition, general characterizations for 
convergent and divergent graphs were given. These results were based on the fact that if a graph $G$ contains a clique of size at least $5$, 
then $KB(G)$ or $KB^2(G)$ contains a clique of larger size. Therefore, in that case $G$ diverges. Similarly if $G$ contains the $gem$ or the $rocket$ 
graphs as an induced subgraph, then $KB(G)$ 
contains a clique of size $5$, and again $G$ diverges. Otherwise it was shown that after removing false-twin vertices of $KB(G)$, the resulting graph is a clique on at most $4$ vertices, in which case $G$ converges. Moreover, it was proved that if a graph $G$ converges, it converges to the graphs 
$K_1$ or $K_3$, and it does so in at most $3$ steps. 
These characterizations leaded to an $O(n^4)$ time algorithm for recognizing convergent or divergent graphs under the biclique operator. 

In this work we show new results that lead to a linear time algorithm to
solve the same problem. We study conditions for a graph to contain a $K_5$, a $C_5$, 
a $butterfly$, a $gem$ or a $rocket$ (see Figure~\ref{grafitos}) as induced subgraphs so we can decide
divergence (since $K_5 \subseteq KB(C_5),KB(butterfly),KB(gem),$ $KB(rocket)$). 
First we prove that if $G$ has at least $7$ bicliques then it diverges. 
Then, we show that every false-twin-free graph with at least $13$ vertices has at least $7$ bicliques, and therefore diverges.
Since adding false-twins to a graph does not change its $KB$ behavior, then the linear algorithm is based on the deletion of false-twin vertices of 
the graph and looking at the size of the remaining graph.  

\begin{figure}[ht!]
	\centering
	\includegraphics[scale=.2]{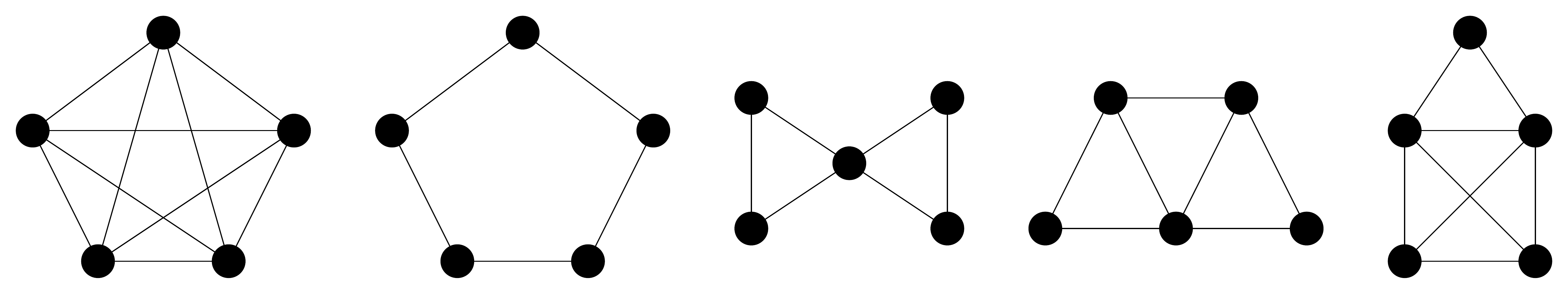}
	\caption{Graphs $K_5$, $C_5$, $butterfly$, $gem$ and $rocket$, respectively.}
	\label{grafitos}
\end{figure}

It is worth to mention that these results are indeed very different from the ones known for the clique operator, for which it is still an open problem to know the computational complexity of deciding the behavior of a graph under the clique operator. 

This work is the full version of a previous extended abstract  published in~\cite{marinayoENDM}. It is organized as follows. In Section $2$ the notation is given. Section $3$ contains some preliminary results that we will use later. In Section $4$ we prove that any graph with at least $7$ bicliques diverges, and that every graph with at least $13$ vertices with no false-twins vertices  contains at least $7$ bicliques. This leads to a linear time algorithm to decide convergence or divergence under the biclique operator.

\section{Notation and terminology}

Along the paper we restrict to undirected simple graphs. Let $G=(V,E)$ be a graph with vertex set $V(G)$ and edge set $E(G)$, and 
let $n=|V(G)|$ and $m=|E(G)|$. A \emph{subgraph} $G'$ of $G$ is a graph $G'=(V',E')$ where $V'\subseteq V$ and 
$E'\subseteq E$. A subgraph $G'=(V',E')$ of $G$ is \emph{induced} when for every pair of vertices $v,w \in G'$, $vw \in E'$ 
if and only if $vw \in E$. A graph $G$ is \emph{$H$-free} if it does not contain $H$ as an induced subgraph.
A graph $G=(V,E)$ is \emph{bipartite} when $V= U \cup W$, $U \cap W = \emptyset$ and $E \subseteq U \times W$. Say that $G$ is a 
\emph{complete graph} when every possible edge belongs to $E$. A complete graph of $n$ vertices is denoted $K_{n}$. A \emph{clique} of $G$ is a maximal complete induced subgraph while a \emph{biclique} is a maximal bipartite complete induced subgraph of $G$. The \emph{open neighborhood} of a 
vertex $v \in V(G)$ denoted $N(v)$, is the set of vertices adjacent to $v$ while the \emph{closed neighborhood} of $v$ denoted by 
$N[v]$, is $N(v) \cup \{v\}$. Two vertices $u$, $v$ are \emph{false-twins} if $N(u)=N(v)$. A vertex $v\in V(G)$ is \emph{universal} if it is 
adjacent to all of the other vertices in $V(G)$. A \emph{path} (\emph{cycle}) of $k$ vertices, denoted by $P_{k}$ ($C_k$), is a set of vertices 
$v_{1}v_{2}...v_{k} \in V(G)$ such that $v_{i} \neq v_{j}$ for all $1 \leq i \neq j \leq k$ and $v_{i}$ is adjacent to $v_{i+1}$ 
for all $1 \leq i \leq k-1$ (and $v_1$ is adjacent to $v_k$). A graph is \emph{connected} if there exists a path between each pair of vertices. 
We assume that all the graphs of this paper are connected.

A $rocket$ is a complete graph with $4$ vertices and a vertex adjacent to two of them. 
A $butterfly$ is the graph obtained by joining two copies of the $K_3$ with a common vertex.

Given a family of sets $\mathcal{H}$, the \emph{intersection graph} of $\mathcal{H}$ is a graph that has the members of 
$\mathcal{H}$ as vertices and there is an edge between two sets $E,F\in\mathcal{H}$ when $E$ and $F$ have non-empty intersection.

A graph $G$ is an \emph{intersection graph} if there exists a family of sets $\mathcal{H}$ such that $G$ is the intersection 
graph of $\mathcal{H}$. We remark that any graph is an intersection graph \cite{Szpilrajn-MarczewskiFM1945}. 

A family of sets $\mathcal{H}$ is \emph{mutually intersecting} if every pair of sets $E,F\in\mathcal{H}$ have non-empty intersection.

Let $F$ be any graph operator. Given a graph $G$, the \emph{iterated graph} under the operator $F$ is defined iteratively as 
follows: $F^{0}(G)=G$ and for $k\geq 1$, $F^{k}(G)=F^{k-1}(F(G))$. We say that a graph $G$ \emph{diverges} under the operator $F$ whenever 
$\lim_{k \rightarrow \infty}|V(F^{k}(G))|=\infty$. We say that a graph $G$ \emph{converges} under the operator $F$ whenever 
$F^{m+1}(G)=F^{m}(G)$ for some $m$, that is, $F^{k}(G)=F^{m}(G)$ for every $k \geq m$ and some $m$. We say that a graph $G$ is \emph{periodic} under 
the operator $F$ whenever $F^{k}(G)=F^{k+s}(G)$ for some $k,s$, $s \geq 2$. 

The \emph{iterated biclique graph} $KB^{k}(G)$ is the graph obtained by applying iteratively the biclique operator $k$ times to $G$. 

In the paper we will use the terms convergent or divergent meaning convergent or divergent under the biclique operator $KB$.

By convention we arbitrarily say that the trivial graph $K_1$ is convergent under the biclique operator (observe that this remark is needed 
since the graph $K_1$ does not contain bicliques).

\section{Preliminary results}\label{resultsantes}

We start with this easy observation.

\begin{observation}[\cite{marinayo}]\label{l3}
If $G$ is an induced subgraph of $H$, then $KB(G)$ is a subgraph (not necessarily induced) of $KB(H)$.
\end{observation}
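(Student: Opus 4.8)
The plan is to produce an explicit embedding $\phi$ of $KB(G)$ into $KB(H)$. Since every biclique is by definition an induced subgraph, I identify a biclique with its vertex set. Step one: given a biclique $B$ of $G$, note that because $G$ is an \emph{induced} subgraph of $H$ we have $H[V(B)] = G[V(B)] = B$, so $V(B)$ induces a complete bipartite subgraph of $H$; it need not be maximal, so I extend it (possible since $H$ is finite) to a biclique $\phi(B)$ of $H$ with $V(B) \subseteq V(\phi(B))$, choosing such an extension arbitrarily.

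Step two, the crux, is to show $\phi$ is injective; I will in fact prove the stronger statement that a single biclique $B'$ of $H$ cannot contain two distinct bicliques of $G$. Suppose $B_1, B_2$ are bicliques of $G$ with $V(B_1), V(B_2) \subseteq V(B')$, and let $X, Y$ be the two sides of the bipartition of $B'$. Both $X$ and $Y$ are independent in $H$, and every vertex of $X$ is adjacent in $H$ to every vertex of $Y$; since $G$ is induced in $H$, the same holds inside $G$. Hence $W := V(B_1) \cup V(B_2)$, partitioned as $W \cap X$ and $W \cap Y$, induces a complete bipartite subgraph of $G$, and $W \supseteq V(B_1)$. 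Maximality of $B_1$ forces $W = V(B_1)$, i.e.\ $V(B_2) \subseteq V(B_1)$; by symmetry $V(B_1) \subseteq V(B_2)$, so $B_1 = B_2$. In particular $\phi(B_1) = \phi(B_2)$ implies $B_1 = B_2$.

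Step three: $\phi$ preserves adjacency. If $B_1$ and $B_2$ are adjacent in $KB(G)$ then $V(B_1) \cap V(B_2) \neq \emptyset$; pick $v$ in the intersection. Then $v \in V(\phi(B_1)) \cap V(\phi(B_2))$, and $\phi(B_1) \neq \phi(B_2)$ by injectivity, so $\phi(B_1)$ and $\phi(B_2)$ are adjacent in $KB(H)$. Thus $\phi$ embeds $KB(G)$ as a subgraph of $KB(H)$. The embedding is in general not induced, since disjoint bicliques of $G$ may extend to bicliques of $H$ that share a vertex, creating extra edges.

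The only genuine obstacle is the injectivity argument in step two, and it is precisely there that the \emph{induced} hypothesis is used in an essential way: if $G$ were merely an arbitrary subgraph of $H$, the set $W$ above need not be complete bipartite in $G$, and two distinct bicliques of $G$ could be swallowed by a single biclique of $H$. The remaining steps are routine unwindings of the definition of the intersection graph.
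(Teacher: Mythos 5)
Your proof is correct. Note that the paper itself gives no proof of this observation --- it is imported from the cited reference \cite{marinayo} --- so there is no in-paper argument to compare against; but your construction (extend each biclique of $G$ to a biclique of $H$, then verify injectivity and preservation of intersection) is the natural one, and you correctly isolate the only nontrivial point, namely that a single biclique of $H$ cannot absorb two distinct bicliques of $G$, which is exactly where the induced hypothesis is needed.
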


The following proposition is central in the characterization of convergent and divergent graphs under the biclique operator. Basically, it shows that if a graph contains a big complete subgraph, it is going to grow in one or two steps of $KB$.

\begin{proposition}[\cite{marinayo}]
Let $G$ be a graph that contains $K_{n}$ as a subgraph, for some $n \geq 4$. Then, $K_{2n-4} \subseteq KB(G)$ or 
$K_{(n-2)(n-3)} \subseteq KB^{2}(G)$.
\end{proposition}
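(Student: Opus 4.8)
The plan is to exploit the structure of $K_n$ inside $G$ together with the explicit construction of bicliques, and then to find many mutually intersecting bicliques in $KB(G)$ so that they form a large clique. First I would fix a copy of $K_n$ in $G$ with vertex set $\{v_1,\dots,v_n\}$, and for each edge $v_iv_j$ consider a biclique $B_{ij}$ of $G$ that contains this edge (such a biclique exists since every edge is contained in some maximal biclique — for instance $\{v_i\}\times\{v_j\}$ extends to a maximal bipartite complete induced subgraph). The key observation is that any two such bicliques $B_{ij}$ and $B_{kl}$ intersect: if they share a vertex among $\{v_i,v_j,v_k,v_l\}$ we are done, and otherwise, since $B_{ij}$ is a maximal biclique and all of $v_k,v_l$ are adjacent to both $v_i$ and $v_j$, at least one of $v_k,v_l$ must already lie in $B_{ij}$ (a vertex adjacent to both sides of a biclique that is not in it contradicts maximality, provided it does not create an odd structure — this needs care, which is the main obstacle, see below). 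So the family $\{B_{ij} : 1 \le i < j \le n\}$ is mutually intersecting in $G$, giving a candidate clique in $KB(G)$.

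Next I would count: there are $\binom{n}{2}$ edges, but the bicliques $B_{ij}$ need not be pairwise distinct, so this only gives a clique of size equal to the number of distinct bicliques among them. The bound $K_{2n-4} \subseteq KB(G)$ suggests that one should instead use a more economical but guaranteed-distinct family: fix two vertices $v_1, v_2$ of the $K_n$, and for each of the remaining $n-2$ vertices $v_i$ ($3 \le i \le n$) build the biclique containing the edge $v_1 v_i$ and, separately, the biclique containing the edge $v_2 v_i$. Distinctness and mutual intersection must be argued; this yields roughly $2(n-2) = 2n-4$ bicliques forming a clique in $KB(G)$, which is the first alternative in the statement. The second alternative, $K_{(n-2)(n-3)} \subseteq KB^2(G)$, would be the fallback when the $B_{ij}$ collapse onto few bicliques in $KB(G)$: if $K_{2n-4}\not\subseteq KB(G)$, then I would argue that $KB(G)$ itself contains a $K_{n-2}$ or $K_{n-3}$-type structure coming from the edges incident to a single vertex of the clique (the star $\{v_1 v_i\}$ gives $n-1$ pairwise intersecting bicliques sharing $v_1$), and then apply the biclique construction one more time — the product form $(n-2)(n-3)$ is exactly what one gets by pairing up a set of $\approx n$ mutually intersecting bicliques in $KB(G)$ into edges and taking bicliques through each edge of the resulting clique in the second iteration.

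The hard part will be the maximality bookkeeping: showing that two bicliques each containing a clique-edge really do intersect, and especially controlling \emph{how many distinct} bicliques one actually obtains, since the whole dichotomy in the statement is driven by the case split ``either enough distinct bicliques survive at level $1$, or they collapse and we recover the growth only at level $2$.'' I would handle the intersection claim by a clean lemma: if $B$ is a biclique of $G$ containing an edge $xy$, and $z$ is a vertex adjacent to both $x$ and $y$, then $z \in B$ (otherwise $B \cup \{z\}$ would still induce a complete bipartite graph contradicting maximality — here one uses that $z$ is adjacent to both endpoints of an edge of $B$, forcing $z$ into the opposite part of whichever side, and checking this is consistent). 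Granting that lemma, every $B_{ij}$ contains \emph{all} of $v_1,\dots,v_n$, so any two of them share the whole $K_n$ and mutual intersection is immediate; the remaining work is purely the counting of distinct bicliques and the level-$2$ estimate, which I expect to be a routine (if slightly tedious) application of the same construction iterated once more.
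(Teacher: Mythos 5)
The paper does not actually prove this proposition --- it is imported from~\cite{marinayo} --- so there is no in-paper proof to compare against; but your proposal has a genuine gap, and it is fatal. The ``clean lemma'' you rely on is false, and in fact the exact opposite holds. If $B$ is a biclique (a \emph{maximal induced} complete bipartite subgraph) containing the edge $xy$, and $z$ is adjacent to both $x$ and $y$, then $z$ \emph{cannot} belong to $B$: placing $z$ in the part containing $x$ would put the edge $zx$ inside one side of an induced bipartite subgraph, and placing it in the part containing $y$ does the same with $zy$. More generally, since an induced complete bipartite subgraph is triangle-free, a biclique meets the clique $\{v_1,\dots,v_n\}$ in at most two vertices; so $B_{ij}\cap\{v_1,\dots,v_n\}=\{v_i,v_j\}$, not all of $\{v_1,\dots,v_n\}$. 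Consequently your conclusion that all the $B_{ij}$ contain the whole $K_n$ and are therefore mutually intersecting collapses, and with it the entire derivation of the first alternative $K_{2n-4}\subseteq KB(G)$.

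What survives of your construction is: (a) the $B_{ij}$ are pairwise distinct (a biclique containing $v_i$ and $v_j$ contains no other $v_k$), so $KB(G)$ has at least $\binom{n}{2}$ vertices of this form; (b) $B_{ij}$ and $B_{ik}$ intersect at $v_i$, so the edges at a fixed vertex of the $K_n$ yield cliques of size $n-1$ in $KB(G)$; but (c) $B_{ij}$ and $B_{kl}$ with $\{i,j\}\cap\{k,l\}=\emptyset$ need not intersect at all. Point (c) is exactly the obstacle you flagged and then dissolved with the false lemma, and it is where the real content of the proposition lies: the dichotomy between $K_{2n-4}\subseteq KB(G)$ and $K_{(n-2)(n-3)}\subseteq KB^{2}(G)$ is governed by whether enough of these cross intersections occur, and when they do not one must iterate the edge-to-biclique argument on the cliques already produced inside $KB(G)$. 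Your second paragraph does gesture at the right family (bicliques through $v_1v_i$ and through $v_2v_i$, giving $2(n-2)$ distinct vertices of $KB(G)$), but it never proves mutual intersection for the cross pairs with distinct indices, and the level-two count is only asserted. As written, neither alternative of the proposition is established.
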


Next theorem characterizes the behavior of a graph under the biclique operator.

\begin{theorem}[\cite{marinayo}] \label{divergencia}
If $KB(G)$ contains either $K_5$ or the $gem$ or the $rocket$ as an induced subgraph, then $G$ is divergent. 
Otherwise, $G$  converges to $K_1$ or $K_3$ in at most 3 steps.
\end{theorem}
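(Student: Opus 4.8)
The statement has two essentially independent halves, and I would attack them separately.

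\emph{Divergence.} For the first half I would first reduce the $gem$ and $rocket$ cases to the $K_5$ case. Computing the finitely many bicliques of the $gem$ and of the $rocket$ directly, one verifies that $KB(gem)$ and $KB(rocket)$ each contain $K_5$ --- for the $rocket$ because the apex of its $K_4$ lies in five bicliques, for the $gem$ because its five bicliques pairwise intersect --- so by Observation~\ref{l3}, if $KB(G)$ has an induced $gem$ or $rocket$ then $KB^2(G)=KB(KB(G))$ contains $K_5$ as a subgraph, hence as an induced subgraph. Thus it suffices to prove that $G$ diverges whenever some $KB^k(G)$ contains $K_5$. For that I would iterate the Proposition quoted above: if $K_n\subseteq KB^j(G)$ with $n\ge 5$, then $K_{2n-4}\subseteq KB^{j+1}(G)$ or $K_{(n-2)(n-3)}\subseteq KB^{j+2}(G)$, and for every $n\ge 5$ both $2n-4$ and $(n-2)(n-3)$ are at least $5$ and strictly larger than $n$; hence the clique number of the iterates is unbounded, and so is the vertex count along a subsequence. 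Invoking the dichotomy that every graph is $KB$-convergent or $KB$-divergent (a convergent graph has a bounded sequence of vertex counts), this forces divergence. The delicate point is upgrading ``unbounded along a subsequence'' to $\lim_k|V(KB^k(G))|=\infty$; I expect the clean, self-contained route is a monotonicity lemma, e.g. that $|V(KB(H))|>|V(H)|$ whenever $H$ has an induced $K_5$ (together with the fact that $KB(H)$ then again has an induced $K_5$, $gem$ or $rocket$, so the growth never stalls), and this is the main technical obstacle in this direction.

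\emph{Convergence.} Now suppose $H:=KB(G)$ has no induced $K_5$, $gem$ or $rocket$. The first ingredient I would set up is invariance under false twins: if $u,v$ are false twins of a graph, then by maximality and $N(u)=N(v)$ every biclique containing $u$ also contains $v$ on the same side, so the bicliques of $H$ are in intersection-preserving bijection with those of the graph $H'$ obtained by collapsing each false-twin class, whence $KB(H)\cong KB(H')$; note $H'$ is exactly $H$ with false twins removed. The second and central ingredient is the structural claim that a $\{K_5,gem,rocket\}$-free biclique graph is complete multipartite with at most $4$ parts --- equivalently, it has no induced $\overline{P_3}$ (an edge plus a non-adjacent vertex), the bound $4$ being automatic since a complete $5$-partite graph already contains an induced $K_5$. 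Granting this, in $H'$ non-adjacency is an equivalence relation with only trivial classes, so $H'=K_t$ with $t\in\{1,2,3,4\}$.

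To prove the structural claim I would argue by contradiction: an induced $\overline{P_3}$ in $KB(G)$ means there are bicliques $B_1,B_2,B_3$ of $G$ with $B_1\cap B_2\ne\emptyset$ and $B_3$ disjoint from $B_1\cup B_2$. Using connectedness of $G$ one first produces a short chain of pairwise intersecting bicliques joining $B_3$ to $B_1\cup B_2$, and then runs a case analysis on how $B_1$ and $B_2$ overlap (a single common vertex, a whole common side, and so on) and on how the chain attaches to $B_1\cup B_2$, invoking the characterization of biclique graphs of~\cite{GroshausSzwarcfiterJGT2010} to keep the local configurations finite; in each case one exhibits either five pairwise intersecting bicliques, or bicliques arranged as a $gem$ or a $rocket$, contradicting the hypothesis. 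This case analysis is, I expect, the heart of the whole proof.

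For the final chase, note first that for connected $G$ with at least two vertices $KB(G)$ is connected and non-empty (so the case $t=1$ genuinely forces $H=K_1$), while $G=K_1$ is convergent by convention. Computing bicliques of small complete graphs gives $KB(K_2)=K_1$, $KB(K_3)=K_3$, and $KB(K_4)=K_{2,2,2}$; since the false-twin collapse of $K_{2,2,2}$ is $K_3$, invariance gives $KB(K_{2,2,2})\cong KB(K_3)=K_3$. Tracing $KB^2(G)=KB(H)\cong KB(H')$: if $t=1$ then already $KB(G)=K_1$; if $t=2$ then $KB^2(G)\cong K_1$; if $t=3$ then $KB^2(G)\cong K_3$; if $t=4$ then $KB^2(G)\cong K_{2,2,2}$ and $KB^3(G)\cong K_3$. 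In every case $KB^k(G)$ reaches $K_1$ or $K_3$ within three steps and stays there, since $K_1$ is convergent by convention and $K_3$ is a $KB$-fixed point.
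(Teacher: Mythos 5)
A preliminary remark: Theorem~\ref{divergencia} is quoted from~\cite{marinayo} and the present paper contains no proof of it, so there is nothing in-paper to compare your argument against line by line; I can only judge it against what the paper reports about that proof. Your skeleton matches that report: reduce the $gem$ and $rocket$ to $K_5$ via $KB(gem)=K_5$ and $K_5\subseteq KB(rocket)$ together with Observation~\ref{l3}, drive divergence with the quoted Proposition, and reduce convergence to the statement that $Tw(KB(G))$ is a complete graph on at most four vertices (Corollary~\ref{contraccion}); your terminal computation $KB(K_2)=K_1$, $KB(K_3)=K_3$, $KB(K_4)=K_{2,2,2}$, $KB(K_{2,2,2})=K_3$ is correct and yields the ``at most $3$ steps'' bound. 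One factual slip: no vertex of the $rocket$ lies in five bicliques (the degree-two vertex lies in four); $K_5\subseteq KB(rocket)$ holds because those four bicliques together with the fifth biclique formed by the two $K_4$-vertices adjacent to the degree-two vertex are pairwise intersecting.

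The proposal nevertheless has two genuine gaps, both of which you flag and neither of which you close. First, in the divergence half the Proposition only places a larger clique in $KB^{j+1}(G)$ \emph{or} $KB^{j+2}(G)$, so you obtain unbounded clique numbers only along a subsequence, whereas divergence as defined requires $\lim_{k}|V(KB^{k}(G))|=\infty$; and the appeal to the convergent/divergent dichotomy is circular, since in this paper that dichotomy is itself a consequence of the theorem being proved. The missing ingredient is quantitative: a biclique is an induced complete bipartite, hence triangle-free, subgraph, so it contains at most one edge (indeed at most two vertices) of any clique; consequently a graph with clique number $n$ has at least $\binom{n}{2}$ bicliques, and the $n-1$ bicliques through the clique edges at a fixed vertex are distinct and pairwise intersecting, so the clique number drops by at most one per iteration. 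Combining these facts with the Proposition does force every sufficiently late iterate to be large, but that chain of reasoning is exactly what is absent from your sketch. Second, in the convergence half everything rests on the claim that a $\{K_5,gem,rocket\}$-free biclique graph is complete multipartite with at most four parts; you correctly identify this as the heart of the proof and then only gesture at the case analysis. Judging by Lemmas~\ref{2gemelos} and~\ref{3gemelos} of the present paper, which carry out a comparable (and weaker) analysis over several pages, this step is substantial and cannot be waved through. As it stands the proposal is a sound plan, not a proof.
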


Notice that differently than the clique operator, a graph is never periodic under the biclique operator (with period bigger than 1). 
We remark the importance of the graph $K_5$ to decide the behavior of a graph under the biclique operator since we have that 
$KB(gem) = K_5$ and $K_5 \subseteq KB(rocket)$.

Observe that as proved in~\cite{marinayo}, the biclique graph does not change by the deletion or addition of false-twin vertices since each pair of 
false-twins belongs to exactly the same set of bicliques. That is, for any graph $G$, $KB(G)=KB(G-\{v\})$ for any false-twin vertex $v$. It follows that the behavior of a graph under $KB$ does not change either. Therefore we focus our study on false-twins-free graphs. For that we need the following definition used in~\cite{marinayo}.

Consider all maximal sets of false-twin vertices $Z_1,...Z_k$ and let $\{z_{1},z_{2},...,z_{k}\}$ be the set of \emph{representative vertices} such that $z_i\in Z_i$. The graph obtained by the deletion of all vertices of $Z_i- \{z_i\}$ for $i=1,...,k$, is denoted $Tw(G)$. Observe that $Tw(G)$ has no false-twin vertices.

Using $Tw(G)$, as a corollary of Theorem~\ref{divergencia}, the next useful result was obtained.

\begin{corollary}[\cite{marinayo}]\label{contraccion}
A graph $G$ is convergent if and only if $Tw(KB(G))$ has at most four vertices. Moreover, $Tw(KB(G))=K_{n}$ for $n=1,...,4$.
\end{corollary}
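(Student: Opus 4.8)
The plan is to deduce the corollary from Theorem~\ref{divergencia}, which already provides a clean dichotomy: $G$ is convergent if and only if $KB(G)$ contains none of $K_5$, the $gem$ and the $rocket$ as an induced subgraph (in the convergent case it moreover pins the limit down to $K_1$ or $K_3$ and bounds the number of steps by three). So it only remains to translate the condition ``$KB(G)$ is free of these three induced subgraphs'' into ``$Tw(KB(G))$ has at most four vertices'', and to identify $Tw(KB(G))$ in that situation.

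For the implication ``$|V(Tw(KB(G)))|\le 4 \Rightarrow G$ convergent'', I would first record an elementary fact about false-twins: if $F$ is false-twin-free and $F$ is an induced subgraph of $H$, then $|V(Tw(H))|\ge |V(F)|$. Indeed, for distinct $u,v\in V(F)$ we have $N_F(u)\ne N_F(v)$, and since $F$ is induced in $H$ the vertex of $F$ distinguishing these two neighborhoods also distinguishes $N_H(u)$ from $N_H(v)$; hence the vertices of $F$ lie in $|V(F)|$ pairwise distinct maximal sets of false-twin vertices of $H$, each of which contributes a vertex to $Tw(H)$. Now $K_5$, the $gem$ and the $rocket$ are false-twin-free graphs on five vertices (in $K_5$ any two vertices are adjacent and hence not false-twins; in the $gem$ and in the $rocket$ all open neighborhoods are pairwise distinct), so any of them occurring as an induced subgraph of $KB(G)$ forces $|V(Tw(KB(G)))|\ge 5$. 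Contrapositively, $|V(Tw(KB(G)))|\le 4$ implies $KB(G)$ is free of the three subgraphs, and Theorem~\ref{divergencia} then gives that $G$ is convergent.

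For the reverse implication and the ``moreover'' clause, suppose $G$ is convergent. By Theorem~\ref{divergencia}, $KB(G)$ has no induced $K_5$, $gem$ or $rocket$. Here I would invoke the structural analysis performed inside the proof of Theorem~\ref{divergencia} in~\cite{marinayo}, which shows exactly that a graph of the form $KB(G)$ with no induced $K_5$, $gem$ or $rocket$ becomes a complete graph on at most four vertices after deletion of its false-twins. Applied to $KB(G)$, this yields $Tw(KB(G))=K_n$ with $n\in\{1,2,3,4\}$, hence $|V(Tw(KB(G)))|\le 4$. Combining the two implications gives the equivalence, and the identity $Tw(KB(G))=K_n$ is the ``moreover'' statement. (The degenerate case $G=K_1$, where $KB(G)$ has no vertices, is absorbed by the convention that $K_1$ is convergent.)

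The one genuine obstacle is this structural input. The false-twin lemma is elementary and completely general, but the implication ``no induced $K_5$, $gem$ or $rocket$ $\Rightarrow$ deleting false-twins leaves a complete graph on at most four vertices'' is simply not true for arbitrary graphs: the \emph{paw} (a triangle with a pendant vertex) is false-twin-free, contains no induced $K_5$, $gem$ or $rocket$, and is not complete. So the proof has to use that $KB(G)$ is a biclique graph (in particular the paw, not being complete, is not a biclique graph), and the most economical way to do this is to reuse the piece of the proof of Theorem~\ref{divergencia} in~\cite{marinayo} that already carries out precisely this reduction.
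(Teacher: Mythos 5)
Your proposal is correct, and it matches what the paper does: the corollary is stated here without proof, as a quoted consequence of Theorem~\ref{divergencia} together with the structural analysis carried out inside the proof of that theorem in~\cite{marinayo} (the introduction summarizes exactly the fact you invoke, namely that after removing false-twins from such a $KB(G)$ one is left with a complete graph on at most four vertices). Your treatment of the direction ``$|V(Tw(KB(G)))|\le 4 \Rightarrow G$ convergent'' is a genuinely self-contained improvement: the observation that a false-twin-free induced subgraph $F$ of $H$ forces $|V(Tw(H))|\ge|V(F)|$, combined with the check that $K_5$, the $gem$ and the $rocket$ are false-twin-free on five vertices, is elementary and needs nothing beyond the statement of Theorem~\ref{divergencia}. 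For the converse and the ``moreover'' clause you are also right that no argument from the statement of Theorem~\ref{divergencia} alone can work (your paw example shows the implication fails for graphs that are not biclique graphs), so deferring to the analysis in~\cite{marinayo} is both necessary and exactly what the authors do.
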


We recall that the number of bicliques of a graph may be exponential in the number of its vertices~\cite{PrisnerC2000}. However, if some vertex of a 
graph $G$ lies in five bicliques, then $KB(G)$ contains a $K_5$ thus $G$ diverges. If every vertex of $G$ belong to at most four bicliques, then $G$ has at 
most $2n$ bicliques. Therefore, since each biclique can be generated in $O(n^3 )$~\cite{DiasFigueiredoSzwarcfiterTCS2005,DiasFigueiredoSzwarcfiterDAM2007}, 
constructing $KB(G)$ takes $O(n^4)$. Building $Tw(KB(G))$ can be done in $O(n + m)$ time using the modular decomposition~\cite{modulardecomp}. From 
Corollary~\ref{contraccion}, if $Tw(KB(G))$ has at most four vertices, then $G$ is convergent, otherwise $G$ is divergent. Hence the overall algorithm 
runs in $O(n^4)$ time.

\section{Linear time algorithm}\label{algolinear}

In this section we give a linear time algorithm for deciding whether a given
graph is divergent or convergent under the biclique operator.

Motivated by Theorem~\ref{divergencia} and Corollary~\ref{contraccion}, we study structural properties of a graph $H$ to guarantee that 
its biclique graph $G=KB(H)$ contains $K_5$ and therefore $H$ diverges.

The following two lemmas answer that question.

\begin{lemma}\label{2gemelos}
Let $G=KB(H)$ for some graph $H$. Let $b_1$, $b_2$ be false-twin vertices of $G$ and
$B_1$, $B_2$ their associated bicliques in $H$. Suppose that there are no edges
between vertices of $B_1$ and vertices of $B_2$. Then there exists a vertex $v
\in H$ such that $v$ is adjacent to every vertex of $B_1$ and $B_2$.
Furthermore, $G$ contains a $K_5$ as induced subgraph.
\end{lemma}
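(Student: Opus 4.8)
The plan is to produce the vertex $v$ from a single auxiliary biclique that meets both $B_1$ and $B_2$, and then to read the $K_5$ off an induced $butterfly$. First I would record the easy consequences of the hypotheses. Since false twins are non-adjacent, $B_1\cap B_2=\emptyset$; write $B_i=X_i\cup Y_i$ with both parts non-empty (every biclique contains an edge). As $H$ has no edge between $B_1$ and $B_2$, the graph $H[B_1\cup B_2]$ is disconnected, so --- $H$ being connected and $\emptyset\ne B_2\subseteq V(H)\setminus B_1$ --- there is an edge $u_0p$ with $u_0\in B_1$, $p\notin B_1$, and then $p\notin B_2$ as well (else $u_0p$ would join $B_1$ and $B_2$). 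Let $B'$ be a maximal biclique containing $u_0p$. Then $B'\ne B_1$ (it contains $p\notin B_1$) and $B'\ne B_2$ (it meets $B_1$, which $B_2$ does not), so, since $b'\in N_G(b_1)=N_G(b_2)$, $B'$ meets $B_2$ too. Choose $u\in B'\cap B_1$ and $w\in B'\cap B_2$; relabelling the parts of $B_1$ and of $B_2$ if needed, assume $u\in X_1$ and $w\in X_2$. As $u\not\sim w$ they lie on a common side $S$ of $B'$; let $T\ne\emptyset$ be the other side. Every vertex of $T$ is a neighbour of both $u\in B_1$ and $w\in B_2$, so $T\cap(B_1\cup B_2)=\emptyset$.

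The substantive step --- the one I expect to require care --- is to show that every $t\in T$ is adjacent to every vertex of $B_1\cup B_2$; any such $t$ then serves as $v$. I would treat $B_1$ and $B_2$ separately and symmetrically. Fix $t\in T$, so $t\sim u\in X_1$. If $t$ were non-adjacent to some $y\in Y_1$, then $\{u,y,t\}$ would induce a $P_3$ with centre $u$ --- a complete bipartite induced subgraph --- hence would extend to a maximal biclique $B^*$ in which $u$ occupies one side and $y,t$ the other. Since $t\notin B_1$ we have $B^*\ne B_1$, and $B^*$ meets $B_1$, so $B^*\ne B_2$; by the false-twin property $B^*$ therefore meets $B_2$, at some $w^*$. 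As $w^*\ne u$ and $w^*\not\sim u$, the vertex $w^*$ lies on $u$'s side of $B^*$, and hence is adjacent to $y$ --- impossible, because $w^*\in B_2$ and $y\in B_1$. Thus $t$ is adjacent to all of $Y_1$; the same argument with the two parts of $B_1$ interchanged (now starting from a vertex of $Y_1$ that $t$ sees) shows $t$ is adjacent to all of $X_1$, and symmetrically $t$ is adjacent to all of $X_2\cup Y_2$, starting from $t\sim w\in X_2$.

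To conclude, pick $v\in T$ and arbitrary $x_i\in X_i$, $y_i\in Y_i$ for $i=1,2$. These five vertices are distinct ($x_i,y_i$ lie in the disjoint sets $B_1,B_2$ and $v\notin B_1\cup B_2$), $v$ is adjacent to the other four, $x_1\sim y_1$, $x_2\sim y_2$, and there are no further edges among them because $H$ has none between $B_1$ and $B_2$; so they induce a $butterfly$ in $H$. By Observation~\ref{l3}, $KB(butterfly)$ is a subgraph of $KB(H)=G$, and since $K_5\subseteq KB(butterfly)$, the graph $G$ contains $K_5$ --- which, being complete, necessarily occurs as an induced subgraph. The only genuinely delicate point is bookkeeping which side of each extended biclique a distinguished vertex falls on; this is always forced by the edges and non-edges among $B_1$, $B_2$, and $t$, and it is exactly the absence of edges between $B_1$ and $B_2$ that makes every such extension collapse to a contradiction.
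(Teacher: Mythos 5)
Your proof is correct, and while it shares the paper's overall strategy --- use the false-twin property to force every auxiliary biclique meeting $B_1$ to also meet $B_2$, deduce that some vertex outside $B_1\cup B_2$ dominates both bicliques, and extract a $K_5$ --- the execution differs at every stage. The paper locates its central vertex $w$ as the first vertex off $B_1$ on a shortest path toward $B_2$, and only later argues (via the biclique through the edge $vw$) that $w$ also has a neighbour in $B_2$; you instead take a maximal biclique $B'$ through an arbitrary edge leaving $B_1$ and let the false-twin property hand you a biclique already touching both $B_1$ and $B_2$, so every vertex of its opposite side $T$ starts out adjacent to one vertex of each. Your domination argument is a single uniform step (extend the offending non-edge $ty$ together with $u$ to a maximal biclique, which must meet $B_2$; the intersection vertex is forced onto $u$'s side and hence adjacent to $y\in B_1$ --- a contradiction either way), whereas the paper splits into two cases according to whether the non-neighbour $x$ of $w$ is adjacent to $v$. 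Finally, you extract the $K_5$ by exhibiting an induced $butterfly$ and invoking Observation~\ref{l3} together with $K_5\subseteq KB(butterfly)$ (a fact the paper records in the introduction), while the paper directly names five mutually intersecting bicliques $B_2,\dots,B_6$; these are essentially the same five bicliques, but your route sidesteps checking their pairwise distinctness by hand. Both arguments are sound; yours is somewhat tighter and makes the role of the false-twin hypothesis more transparent.
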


\begin{proof} Let $b_1$, $b_2$ be false-twin vertices of $G$ and $B_1$, $B_2$ their associated bicliques in $H$, such that there are no edges between vertices of $B_1$ and vertices of $B_2$. Since $G$ is connected, take the shortest path from some vertex of $B_1$ to $B_2$. Let $w$ be the first vertex in the path such that $w \notin B_1$. Clearly, $w \notin B_2$. Let $v \in B_1$ be a vertex adjacent to $w$. 

First, suppose that there exists a vertex $x \in B_1$ such that $x$ is not adjacent to $w$. Consider the following alternatives:

\textbf{Case 1}: $xv \in E(H)$. Then $\{x,v,w\}$ is contained in some biclique $B$, $B \neq B_1$ and $B \neq B_2$, such that it does not intersect $B_2$ since there is no edge between $B_1$ and $B_2$. This is a contradiction since $b_1$ and $b_2$ are false-twin vertices. It follows that every vertex in $B_1$ not adjacent to $w$ is not adjacent to $v$.

\textbf{Case 2}: $xv \notin E(H)$. Then there exists a vertex $y \in B_1$ adjacent to $v$ and $x$. By case 1, $y$ must be adjacent to $w$. This is the same situation as previous case but considering $y$ instead of $v$ and the biclique containing $\{x,y,w\}$ instead of $\{x,v,w\}$. A contradiction.

We conclude that for all $x \in B_1$, $x$ is adjacent to $w$.

Now, the edge $vw$ is contained in a biclique $B$ that must intersect $B_2$ as $b_1,b_2$ are false-twin vertices of $G$. Since there are no edges between $B_1$ and $B_2$ there exists a vertex 
$z \in B_2 $ such that $z$ is adjacent to $w$. The same argument used for $v \in B_1$ and $w$ also holds for $z \in B_2$ and $w$. That is, for all 
$z\in B_2$, $z$ is adjacent to $w$.

\begin{figure}[ht!]
	\centering
	\includegraphics[trim=0 100 100 140, clip, scale=.4]{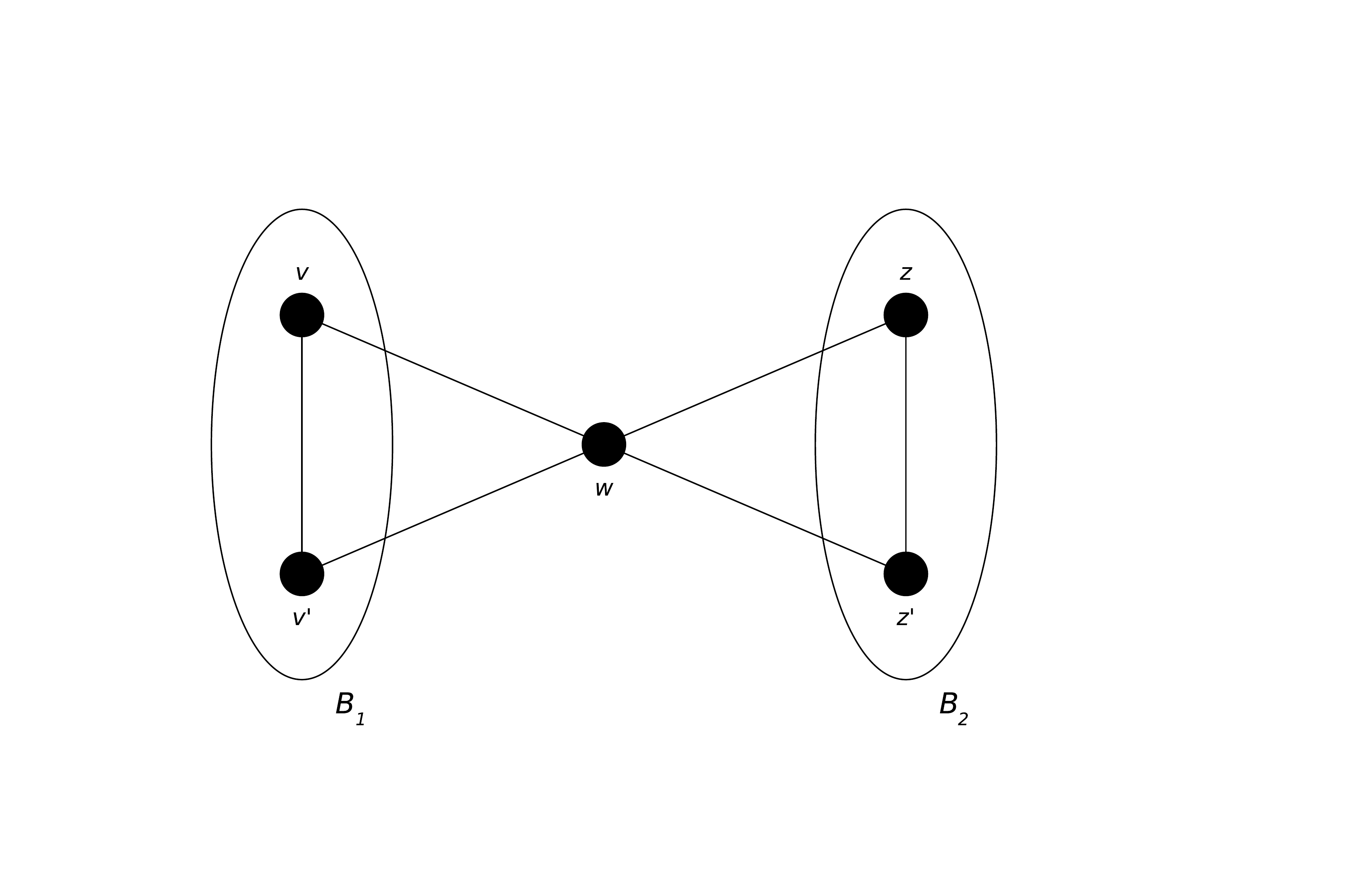}
	\caption{Bicliques $B_1$ and $B_2$ with $4$ new bicliques containing $w$.}
	\label{Figlemma1}
\end{figure}

Finally, let $v,v'$ be adjacent vertices in $B_1$ and let $z,z'$ be adjacent vertices in $B_2$. Since $v,v',z,z'$ are adjacent to $w$, then $\{v,w,z\}$, $\{v',w,z\}$, $\{v,w,z'\}$ and $\{v',w,z'\}$ are contained in four different bicliques $B_3$, $B_4$, $B_5$ and $B_6$ such that $B_i \neq B_j$, for $1 \leq i \neq j \leq 6$. As $B_i \cap B_j \neq \emptyset$, for $2 \leq i \neq j \leq 6$ (Fig.~\ref{Figlemma1}), $K_5$ is an induced subgraph of $G$.
\end{proof}

\begin{lemma}\label{3gemelos}
Let $G=KB(H)$ for some graph $H$. Let $b_1, b_2, b_3$ be false-twin vertices of $G$
and let $B_1, B_2, B_3$ be their associated bicliques in $H$. Suppose that for
any pair of bicliques $B_i, B_j$, $1 \leq i \neq j \leq 3$, there is an edge
between some vertex of $B_i$ and some vertex of $B_j$. Then, $K_5$ is an induced
subgraph of $G$.
\end{lemma}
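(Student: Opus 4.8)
The plan is to build, from the three bicliques $B_1,B_2,B_3$ and the edges joining them, five pairwise-intersecting bicliques of $H$: such a family is a clique of $G=KB(H)$, and one then checks it induces $K_5$. Note at the outset that, unlike in Lemma~\ref{2gemelos}, here every pair $B_i,B_j$ is joined by an edge, so Lemma~\ref{2gemelos} cannot simply be invoked and a new argument is required.

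First, the setup. Since $b_1,b_2,b_3$ are false twins of $G$ they are pairwise non-adjacent in $G$, hence $B_1,B_2,B_3$ are pairwise vertex-disjoint in $H$. For each pair $i\ne j$ fix an edge $e_{ij}$ with one endpoint in $B_i$ and one in $B_j$, and a biclique $B_{ij}$ of $H$ containing $e_{ij}$. As $B_{ij}$ contains a vertex of $B_i$ and a vertex of the disjoint biclique $B_j$, we get $B_{ij}\notin\{B_1,B_2,B_3\}$; since $B_{ij}$ meets $B_i$ and $b_1,b_2,b_3$ have a common neighborhood in $G$, the vertex of $G$ corresponding to $B_{ij}$ is adjacent to all of $b_1,b_2,b_3$, so $B_{ij}$ meets each of $B_1,B_2,B_3$. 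Moreover, for $\{i,j,k\}=\{1,2,3\}$, the two endpoints of $e_{ij}$ together with any vertex $t\in B_{ij}\cap B_k$ induce a $P_3$ inside $B_{ij}$: a biclique is triangle-free, so $t$ is adjacent to exactly the endpoint of $e_{ij}$ lying on the opposite side of the bipartition of $B_{ij}$.

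These facts are not yet enough: the six bicliques $B_1,B_2,B_3,B_{12},B_{13},B_{23}$ can form at most a $K_4$ in $G$, because any five of them contain two of $B_1,B_2,B_3$, which are pairwise non-adjacent in $G$. So extra bicliques must be produced, and I would do this exactly as in the proof of Lemma~\ref{2gemelos}: it suffices to find a vertex $w\in H$ whose neighborhood contains an induced $2K_2$, say on an edge $vv'$ and an edge $zz'$ with no edge joining $\{v,v'\}$ and $\{z,z'\}$. Then each of $\{v,w,z\},\{v',w,z\},\{v,w,z'\},\{v',w,z'\}$ is contained in a biclique; these four bicliques are pairwise intersecting (they contain $w$) and pairwise distinct (adjacent vertices lie on opposite sides of a biclique, which rules out all the coincidences), and the biclique containing the edge $vv'$ is a fifth biclique, distinct from the other four, that meets all of them. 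These five pairwise-intersecting bicliques give the induced $K_5$ in $G$.

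It remains to locate such a $w$, and this is where the real work lies. The natural candidate is the ``hub'' of the $P_3$ inside a connector $B_{ij}$, that is, the vertex $h$ of $B_i$ which, inside $B_{ij}$, is adjacent both to the endpoint of $e_{ij}$ in $B_j$ and to the vertex $t\in B_{ij}\cap B_k$. One edge of the desired $2K_2$ is then provided by the bipartition of $B_i$ through $h$, and the second edge must be extracted from a further edge incident to $h$ --- coming either from a second connector that also meets $B_i$ at $h$, or from the side of $B_k$ containing $t$. The obstacle, and the bulk of the proof, is to show that $w$ and the four vertices can always be chosen so that the subgraph they induce is exactly $2K_2$ and has no further edge; this forces a case analysis according to which sides of the bicliques the connecting edges touch, whether the three connectors are distinct or coincide, and whether some $B_i$ or some connector is a single edge $K_2$. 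In the most degenerate configurations I expect the grid construction to be replaced by a direct enumeration of pairwise-intersecting bicliques among those already built together with one or two more obtained from the bipartitions of the $B_i$. I anticipate that this bookkeeping, rather than any single conceptual point, is where the effort goes.
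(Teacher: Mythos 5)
Your setup is sound: $B_1,B_2,B_3$ are indeed pairwise disjoint, each connector biclique $B_{ij}$ must meet all three of $B_1,B_2,B_3$ because the $b_i$ are false twins, and you are right that the six bicliques $B_1,B_2,B_3,B_{12},B_{13},B_{23}$ alone can never yield a $K_5$ in $G$, so further bicliques must be manufactured. The grid mechanism you recall from Lemma~\ref{2gemelos} (a vertex $w$ whose neighborhood contains an induced $2K_2$ gives four pairwise-distinct bicliques through $w$ plus a fifth through the edge $vv'$) is also correctly justified. But the proof stops exactly where the lemma becomes hard: the existence of such a configuration is asserted as a plan (``the obstacle, and the bulk of the proof, is to show that $w$ and the four vertices can always be chosen\dots'', ``I expect\dots'', ``I anticipate\dots'') and never carried out. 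That case analysis is not bookkeeping around a finished idea --- it \emph{is} the content of the lemma, and the paper devotes several pages of cases (organized around whether the three bicliques are linked by a triangle, an induced $C_4$, or a longer induced cycle) to it.

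There is also a concrete reason to doubt that your single target configuration suffices. The paper's analysis does not always terminate in a vertex adjacent to an induced $2K_2$: several branches end by exhibiting an induced $gem$, $rocket$ or $C_5$ in $H$ and then invoke $KB(gem)=K_5$, $K_5\subseteq KB(rocket)$, $K_5\subseteq KB(C_5)$. A $C_5$ has no vertex of degree exceeding two, and the universal vertex of a $gem$ sees a $P_4$ rather than an induced $2K_2$; so in those configurations no vertex $w$ of the required kind is available among the vertices under discussion, and nothing in your argument produces one elsewhere. Your closing hedge (``in the most degenerate configurations I expect the grid construction to be replaced by a direct enumeration'') concedes this, but an expectation is not a proof. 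To complete the argument you would need, as the paper does, to enlarge the list of target outcomes to ``$H$ contains a $butterfly$, $gem$, $rocket$ or $C_5$, \emph{or} four mutually intersecting bicliques each meeting some $B_i$'' and then verify by an exhaustive case analysis on the connecting edges that one of these always occurs.
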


\begin{proof}
Let $b_1, b_2, b_3$ be the false-twin vertices of $G$ and $B_1, B_2, B_3$ their associated
bicliques in $H$ such that for any pair of bicliques $B_i, B_j$, $1 \leq i \neq
j \leq 3$, there is an edge between some vertex of $B_i$ and some vertex of
$B_j$. We will show that $H$ contains either a $butterfly$, a $gem$, a $rocket$ or a $C_5$, or 
four mutually intersecting bicliques also intersecting with $B_1$, $B_2$ and $B_3$.
In any case we obtain a $K_5$ in $G$.
We have the following cases:

\textbf{Case 1}: There is a $K_3$ with one vertex in each biclique. Let $a\in B_1$, $b\in
B_2$, $c\in B_3$ be the $K_3$. Now $ab$, $ac$ and $bc$ are contained
in $3$ different bicliques of $H$. It is easy to see that none of $B_1$, $B_2$ or $B_3$ are bicliques isomorphic to $K_2$, otherwise 
they would not intersect the biclique containing the opposite edge of the $K_3$ (e.g. $B_1$ with $bc$) contradicting that $b_1,b_2,b_3$
are false-twin vertices.

\textbf{Case 1.1}: One of the bicliques, say $B_1$, is isomorphic to $K_{1,r}$ where the vertex $a$ is in the partition of size one.
As the biclique containing $bc$ must intersect $B_1$, there exists a vertex $d \in B_1$ adjacent to
$b$ and not adjacent to $c$. Now, as $c \notin B_1$, there exists a vertex $e \in B_1$, such that $c$ is adjacent to $e$.
Therefore $\{a,b,c,d,e\}$ induces a $gem$ or a $rocket$ depending on the
edge $eb$. See Figure~\ref{case1-1}.

\FloatBarrier
\begin{figure}[h]
	\centering
	\includegraphics[scale=.4]{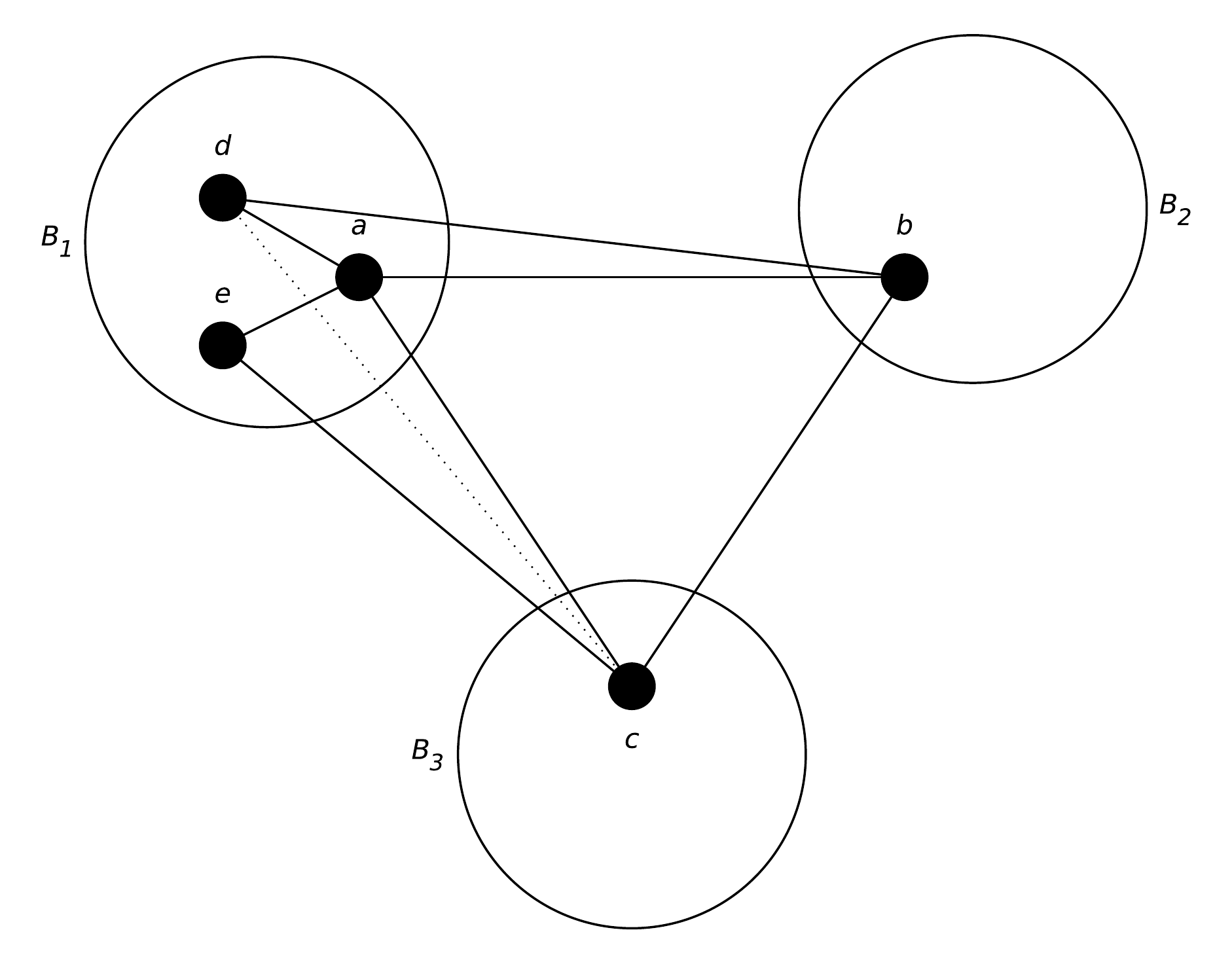}
	\caption{Case 1.1}
	\label{case1-1}
\end{figure}
\FloatBarrier

\textbf{Case 1.2}: None of the bicliques $B_1$, $B_2$ and $B_3$ are isomorphic to $K_{1,r}$ where the vertex of the $K_3$ is in the partition of 
size one. As the biclique containing $bc$ has to intersect $B_1$, 
call $e \in B_1$ a vertex in that intersection and w.l.g. assume $e$
adjacent to $c$ and not to $b$.

\textbf{Case 1.2.1}: Suppose $e$ is adjacent to $a$. Now, as $B_1$ is not isomorphic to $K_{1,r}$, we have the following cases. 

If there exists a vertex $g \in B_1$ adjacent to $e$ and not adjacent to
$b$.  Depending on the edge  $gc$,  $\{a,b,c,e,g\}$  induces  a  $gem$  or
$\{a,b,e\}$,  $\{b,c,e\}$, $\{a,c\}$  and $\{g,e,c\}$  are  contained in
four mutually intersecting bicliques.
See Figure~\ref{case1-2-1-1}.

\FloatBarrier
\begin{figure}[h]
	\centering
	\includegraphics[scale=.4]{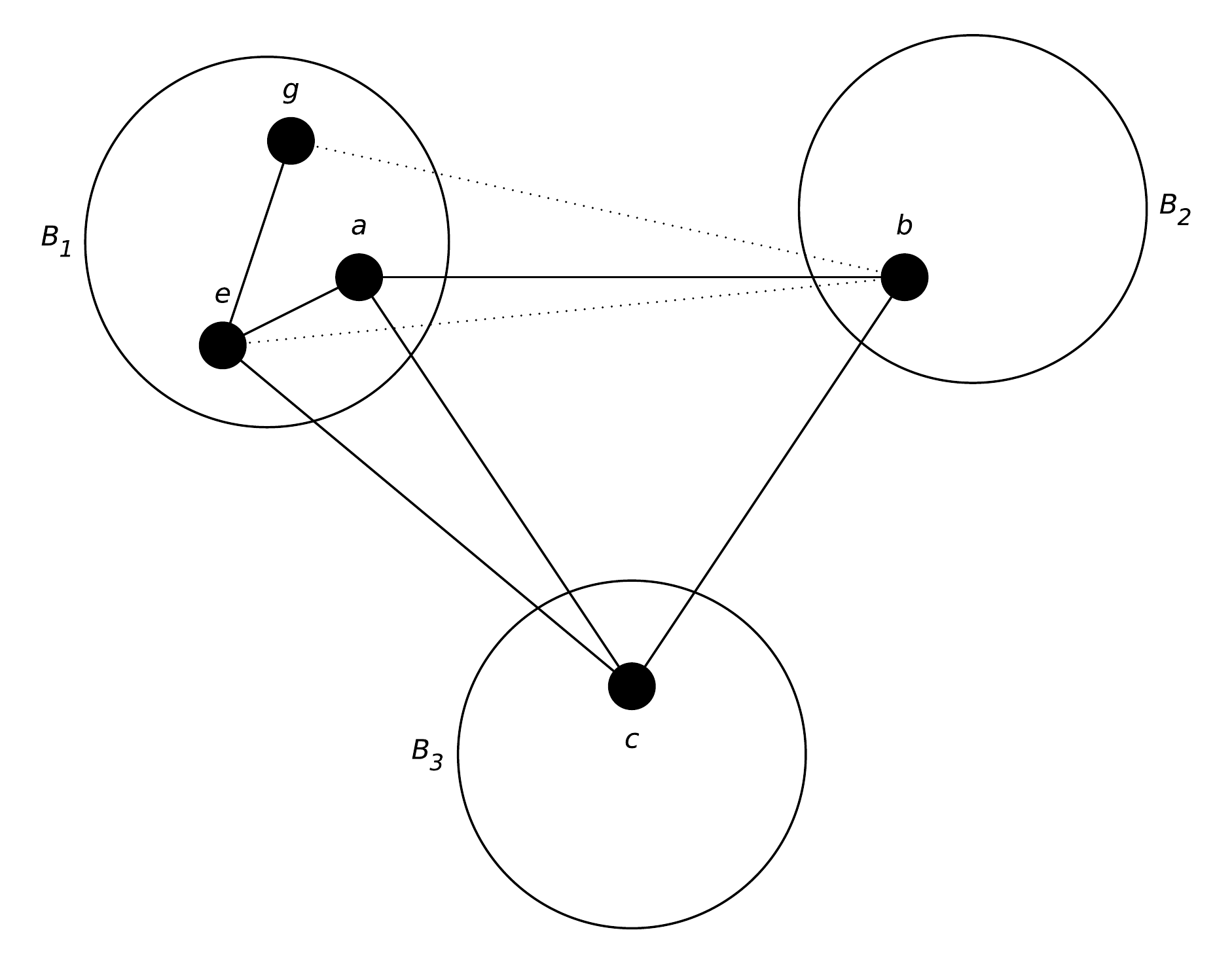}
	\caption{Case 1.2.1 with $g$ adjacent to $e$ and not to $b$}
	\label{case1-2-1-1}
\end{figure}
\FloatBarrier

Otherwise, assuming that  every $g \in B_1$ adjacent  to $e$ is adjacent
to $b$,  and considering that $b  \notin B_1$, there exists  $f \in B_1$
adjacent to $a$  and $b$.  In this case  $\{a,b,c,e,f\}$ induces a $gem$
or a $rocket$ depending on the edge $fc$.
See Figure~\ref{case1-2-1-2}.

\FloatBarrier
\begin{figure}[h]
	\centering
	\includegraphics[scale=.4]{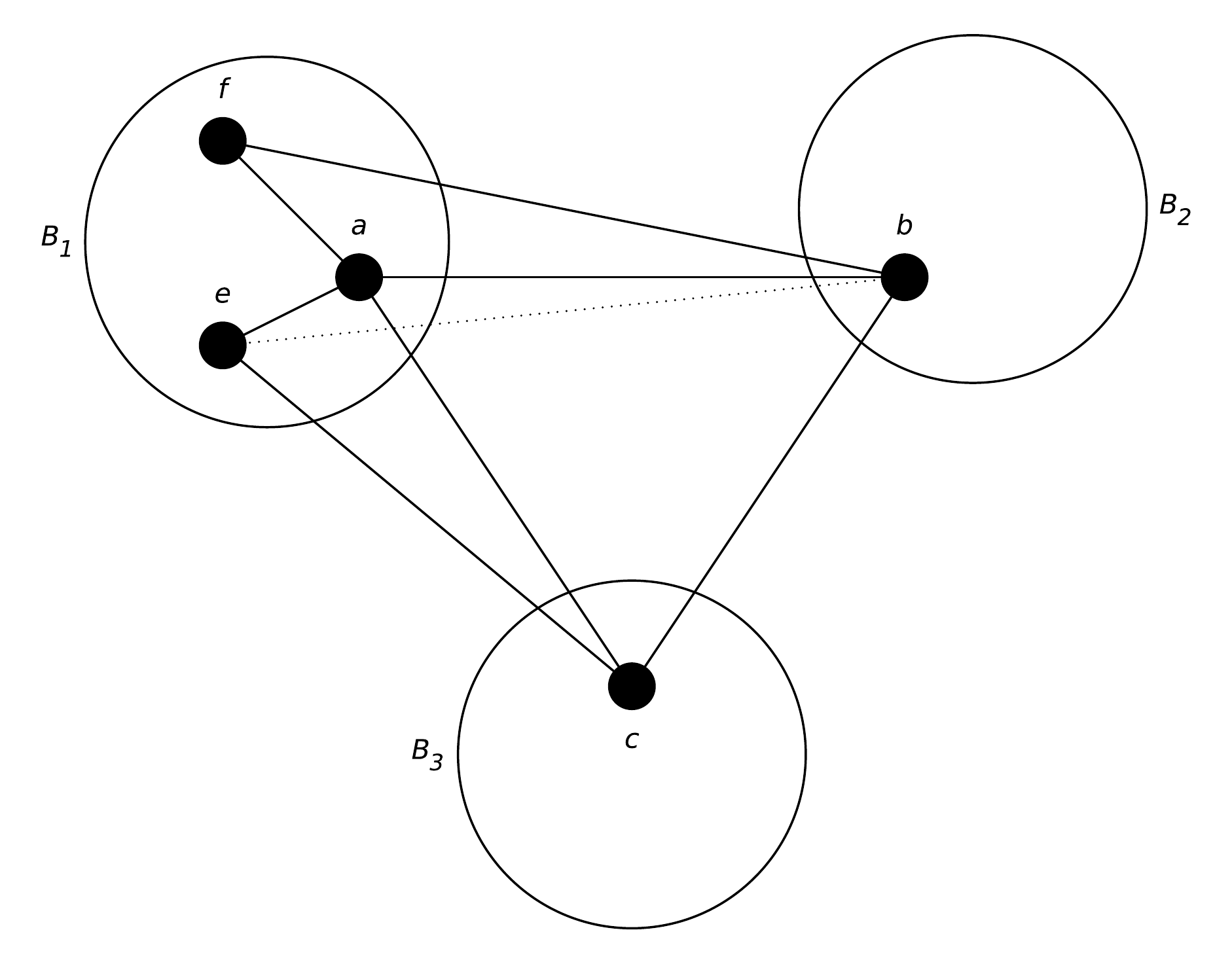}
	\caption{Case 1.2.1 with $f$ adjacent to $a$ and $b$}
	\label{case1-2-1-2}
\end{figure}
\FloatBarrier

\textbf{Case 1.2.2}: There exists $e \in B_1$ not adjacent to $a$ and $b$, and adjacent to $c$.
Let $h \in B_1$ be any vertex adjacent to $a$ (and consecuently to $e$). Clearly, if $h$ is adjacent to $c$, it must be adjacent to $b$, otherwise we
would be in the case above. So, if $h$ is adjacent to both, $\{a,b,c,e,h\}$ induces a $rocket$.
Therefore, we can assume that for every $h \in B_1$ adjacent to $e$ and $a$, $h$ is not adjacent to $b$ and $c$.
Moreover, this must be also true for every vertex in $B_2$ adjacent to $b$ and every vertex in $B_3$ adjacent to $c$, that is,
every vertex in $B_2$ adjacent to $b$ is not adjacent to $a$ and $c$, and every vertex in $B_3$ adjacent to $c$ is not 
adjacent to $a$ and $b$. Suppose that there exists $k\in B_2$ adjacent to $b$ and not adjacent to $h$, then $\{a,b,h\}$,
$\{a,b,k\}$, $\{b,c\}$ and $\{a,c\}$ are contained in four mutually intersecting bicliques.
Then, we can assume $k$ is adjacent to $h$. Indeed, assume that every vertex in $B_1$ adjacent to $a$ is adjacent to 
every vertex in $B_2$ adjacent to $b$ and to every vertex in $B_3$ adjacent to $c$. Also 
every vertex in $B_2$ adjacent to $b$ is adjacent to every vertex in $B_3$ adjacent to $c$.
Otherwise, we would obtain four mutually intersecting bicliques. Let $j \in B_3$ adjacent to $c$.
Observe that if $e$ is adjacent to $k$ then $e$ is also adjacent to $j$,
otherwise we are in case  1.2.1 considering the $K_3 = \{h,k,j\}$. Then,
depending on the edge $ek$,  $\{e,h,k,j,c\}$  induces  a $rocket$,  or
$\{a,b,k,h\}$, $\{a,c,j,h\}$, $\{b,c,k,j\}$ and $\{e,c,b\}$ 
are contained in four mutually intersecting bicliques. 
See Figure~\ref{case1-2-2}.

\FloatBarrier
\begin{figure}[h]
	\centering
	\includegraphics[scale=.4]{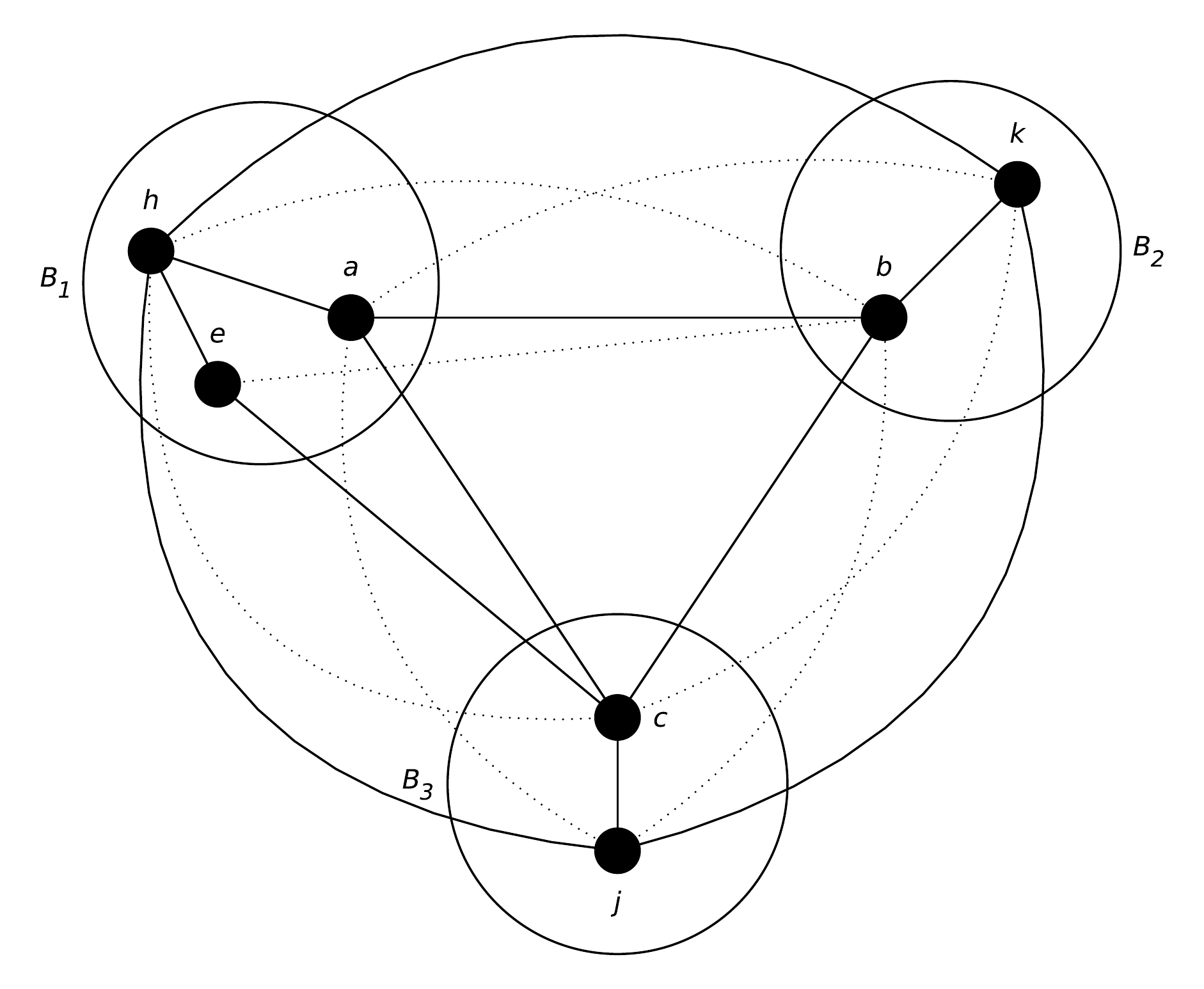}
	\caption{Case 1.2.2}
	\label{case1-2-2}
\end{figure}
\FloatBarrier
We covered all the cases when a $K_3$ is in $H$.

\textbf{Case 2}:  There is an induced $C_4=\{a,b,c,d\}$ in $H$ such
that $a,b\in  B_1$, $c\in B_2$ and $d\in  B_3$, that is, $ab,  bc, cd, ad
\in  E(H)$.  Now  as $c  \notin B_1$,  there exists  either $e  \in B_1$
adjacent to $b$ and $c$, or $h \in B_1$ adjacent to $a$ and not adjacent
to $c$. We have the following cases:

\textbf{Case 2.1}: $e$ is adjacent to $b$ and $c$ (the case where $e$ is
adjacent to $a$ and $d$ is analogous).  Observe that $e$ is not adjacent
to $d$ as  we would obtain a  triangle with one vertex  in each biclique
(case 1).  Let  $k \in  B_3$ be  a vertex  adjacent to  $d$. If  $k$ is
adjacent to  $c$ then  $\{b,e,c,d,k\}$ induces a  $butterfly$ (otherwise
case 1, considering $b$ and $k$, or $e$ and $k$, adjacent vertices).  Then  assume every vertex $k  \in B_3$ adjacent to  $d$ is not
adjacent to $c$. Furthermore, if any vertex $j \in B_2$ adjacent to $c$, 
is also adjacent to $d$, then $\{e,b,c,d,j\}$ induces a $butterfly$, a $gem$ or a $rocket$ depending
on the edges $ej$, $bj$. Therefore we can assume that every vertex $j \in
B_2$ adjacent to $c$  is not adjacent to $d$.
See Figure~\ref{case2-1}.

\FloatBarrier
\begin{figure}[h]
	\centering
	\includegraphics[scale=.4]{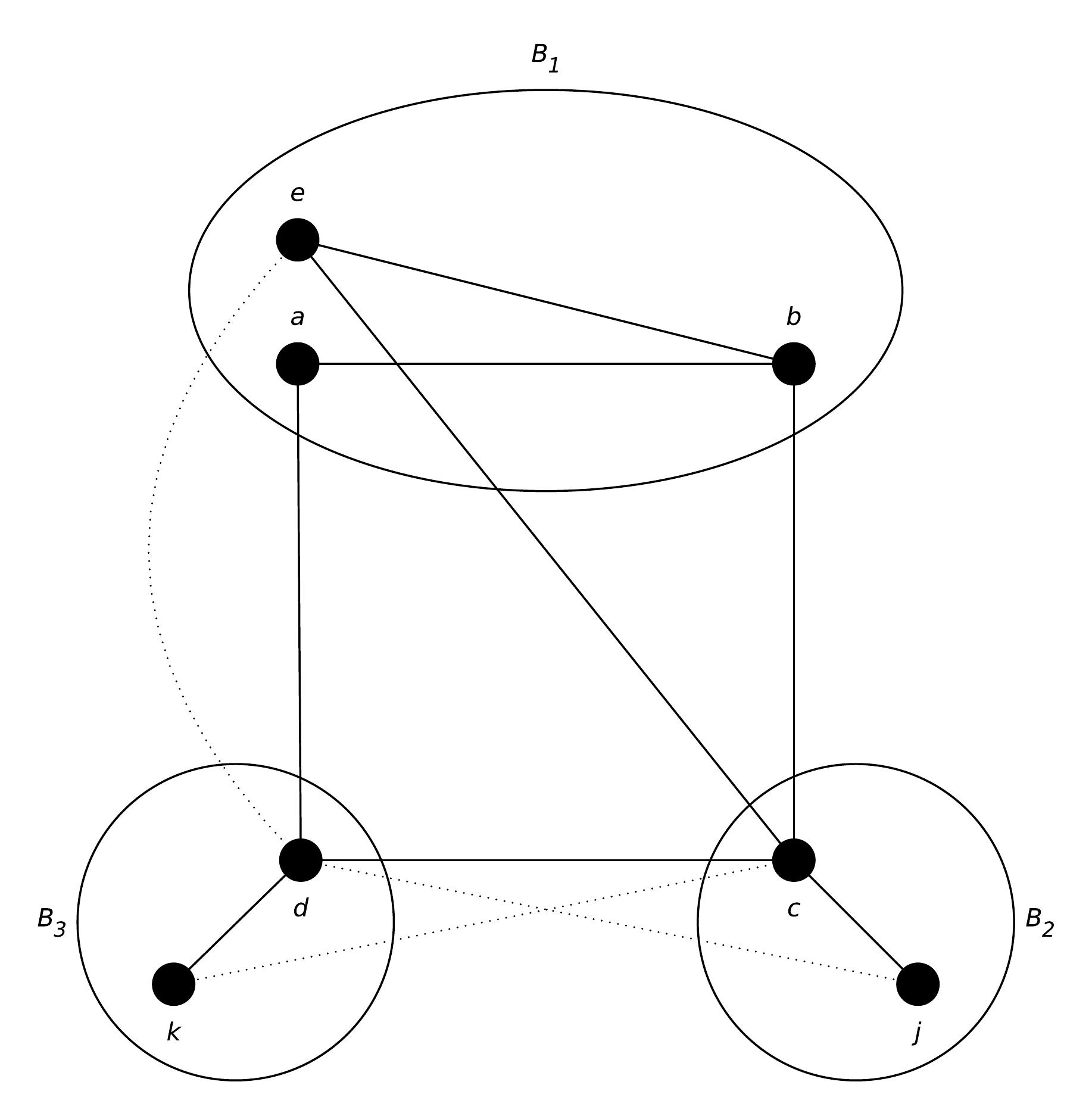}
	\caption{Case 2.1}
	\label{case2-1}
\end{figure}
\FloatBarrier

\textbf{Case 2.1.1}: There  is some $k$ not adjacent to  $b$.  Now as $c
\notin B_3$, there  exists $\ell \in B_3$ adjacent to  $k$ and not adjacent
to  $c$.  If  $\ell$  is adjacent  to  $b$ then  $\{\ell,b,c,d,k\}$ induces  a
$C_5$. We can assume $\ell$ is not adjacent to $b$.  

If $k$ is adjacent to $a$ then $\{a,b,c,d\}$, $\{a,b,k\}$, $\{c,d,k\}$ and
one of $\{a,k,\ell\}$ or $\{a,d,\ell\}$ depending on the edge $al$, are contained
in four different mutually intersecting bicliques. So we can assume $k$
is not adjacent to $a$.

As $a \notin B_2$,  either $a$ is not adjacent to some vertex of
$B_2$ that is adjacent to $c$, or $a$ forms a triangle with two vertices of
$B_2$. 

Suppose first that $a$ is not adjacent to $j \in B_2$  such that $j$ is
adjacent do $c$.  Note that $\{a,b,c,d\}$, $\{a,c,d,k\}$ and $\{c,d,e\}$
are contained  in three  different mutually intersecting  bicliques. See
Figure~\ref{case2-1-1a}.

\FloatBarrier
\begin{figure}[h]
	\centering
	\includegraphics[scale=.4]{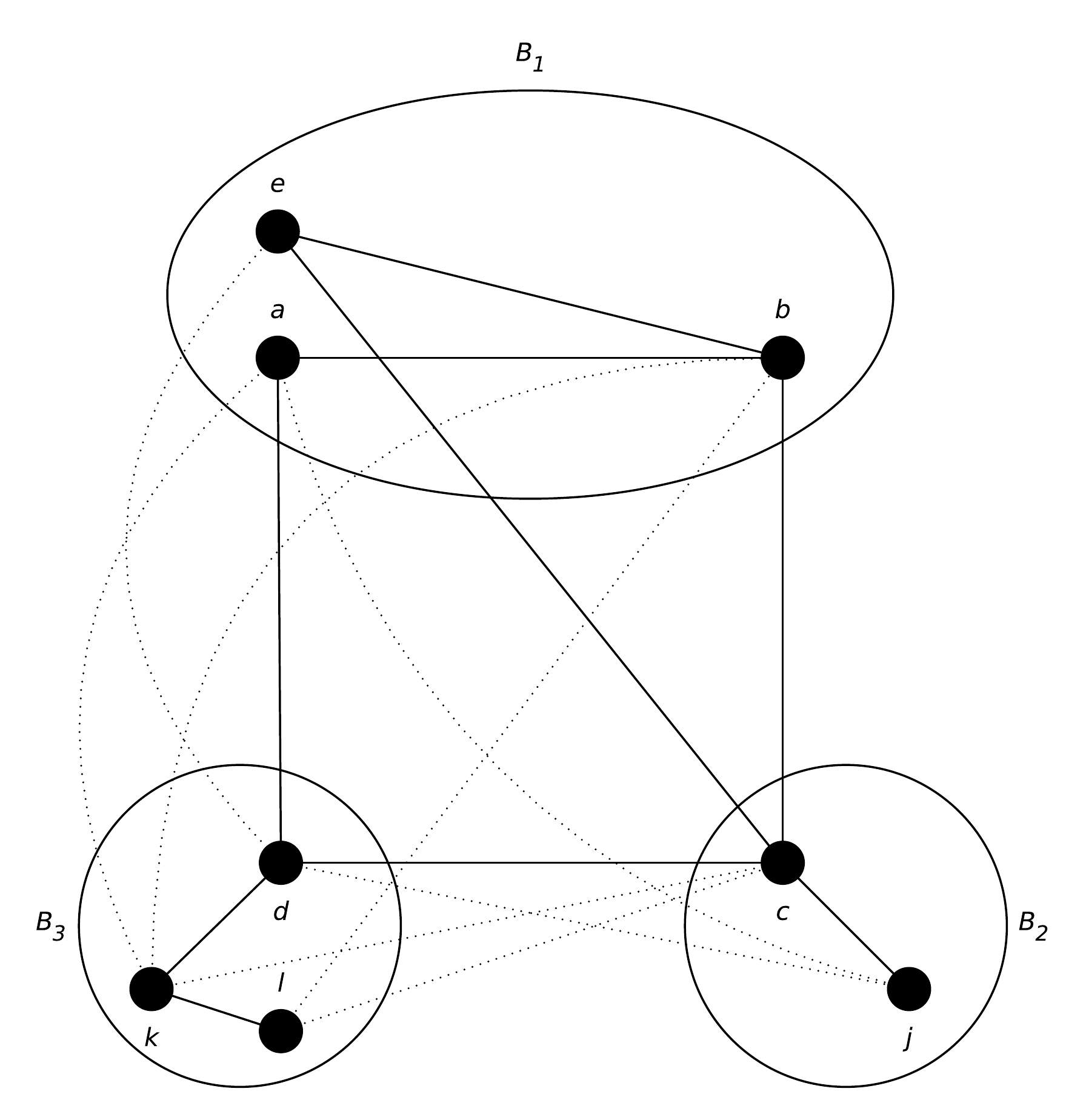}
	\caption{Case 2.1.1 with $a$ not adjacent to $j$}
	\label{case2-1-1a}
\end{figure}
\FloatBarrier

If $j$ is  not adjacent  to $b$  then $\{b,c,j\}$  is contained  in the
fourth  biclique  (and  we  got  four  different  mutually  intersecting
bicliques).  So suppose  $j$ is adjacent to $b$. If  $j$ is not adjacent
to $e$, the fourth biclique  contains $\{a,b,e,j\}$.  Finally, if $j$ is adjacent
to $e$ then $\{c,d,j\}$  is contained  in the
fourth biclique.

Suppose next that $a$ forms a triangle with two vertices of $B_2$. That is, there are two adjacent vertices  $j, p \in
B_2$ such that $j$ is adjacent to $c$ and $a$, and $p$ is adjacent
to $a$ (see Figure~\ref{case2-1-1b}).  If $p$ is adjacent to $b$, 
then depending on the  edge  $ep$,  $\{a,b,c,e,p\}$ induces a $butterfly$ or a 
$gem$. Assume therefore that $p$  is  not  adjacent  to  $b$. 
Then, $\{a,b,c,d\}$, $\{a,c,d,k\}$ and depending on the edge $dp$, either $\{c,d,e\}$ and $\{a,d,p\}$, 
or $\{c,d,j,p\}$  and  $\{a,b,p\}$  are  contained  in  four  different
mutually  intersecting  bicliques.  

\FloatBarrier
\begin{figure}[h]
	\centering
	\includegraphics[scale=.4]{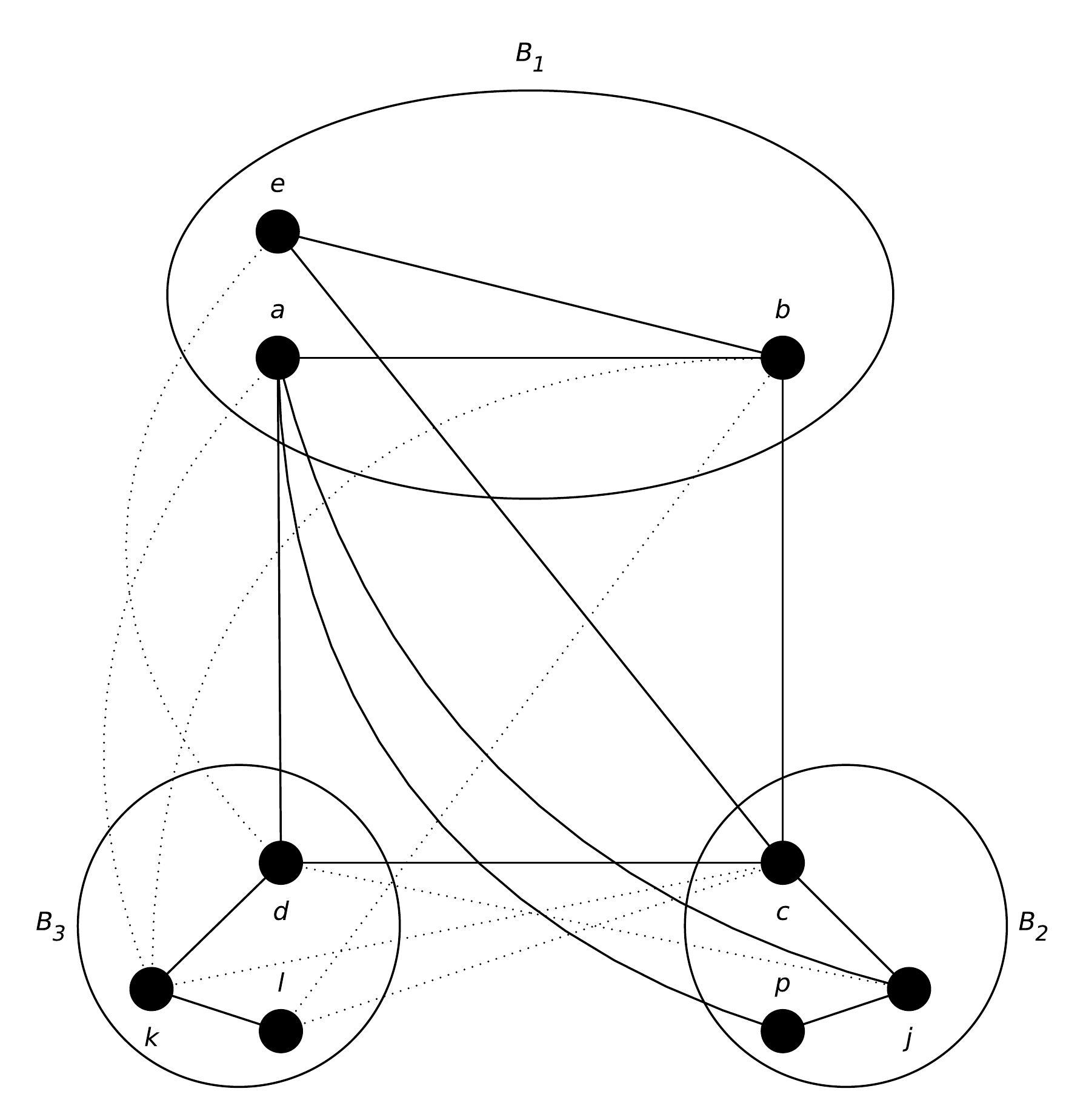}
	\caption{Case 2.1.1 $a$ form a  triangle with 2 vertices of $B_2$}
	\label{case2-1-1b}
\end{figure}
\FloatBarrier

\textbf{Case  2.1.2}: Every vertex  $k \in B_3$  adjacent to $d$  is
adjacent  to $b$.   Now as  $b \notin  B_3$, there  exists $m  \in B_3$
adjacent  to  $k$ and  $b$.  Note  that $m$  is  not  adjacent to  $c$,
otherwise case 1.  Then $\{b,e,c,k,m\}$ induces a $butterfly$, $gem$ or
$rocket$ depending on the edges $ek$ and $em$. See Figure~\ref{case2-1-2}.

\begin{figure}[ht!]
	\centering
	\includegraphics[scale=.4]{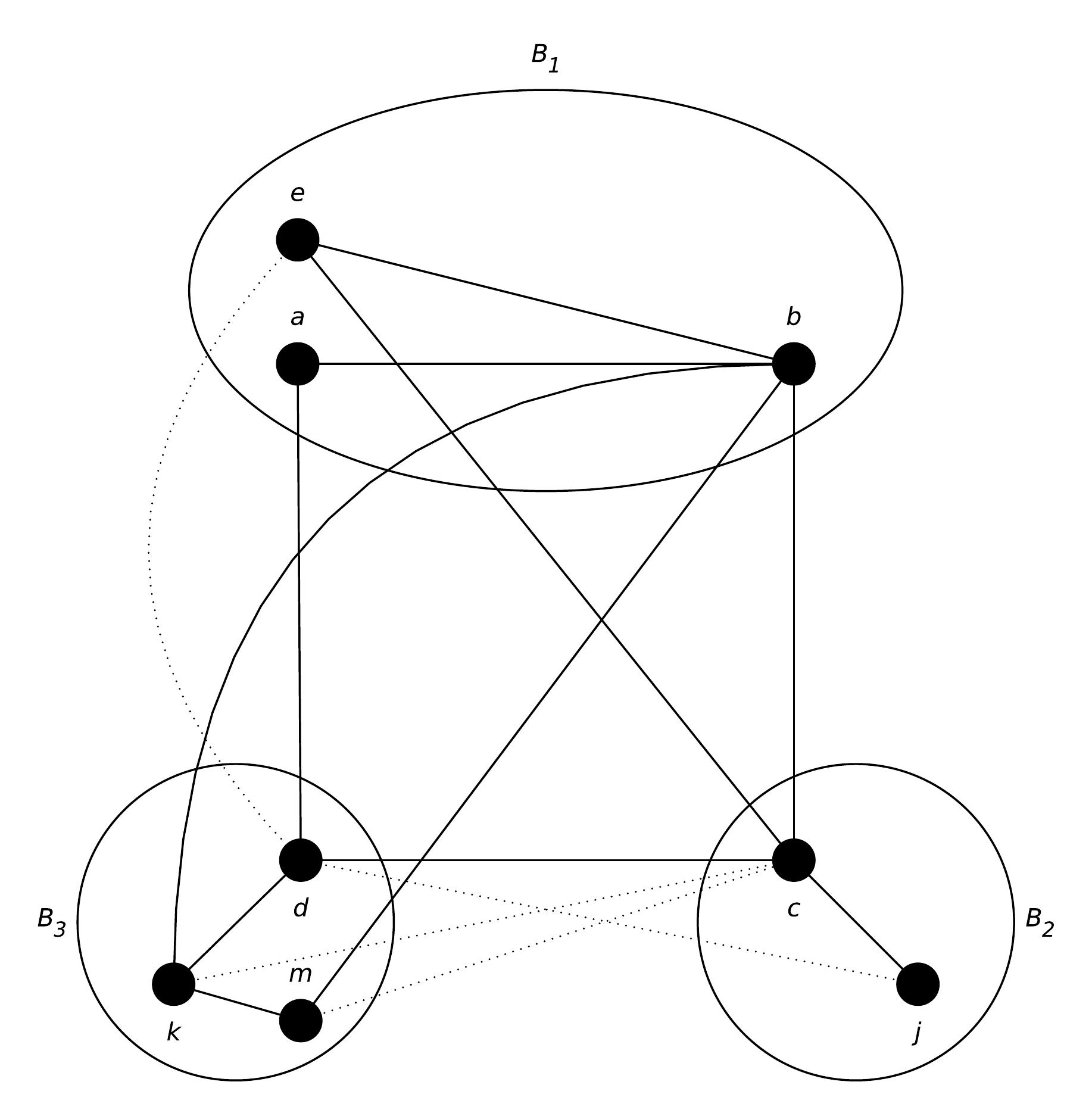}
	\caption{Case 2.1.2}
	\label{case2-1-2}
\end{figure}

\textbf{Case 2.2}:  $h$ is adjacent to  $a$ and not adjacent  to $c$. By
symmetry there  exists $g \in B_1$  adjacent to $b$ and  not adjacent to
$d$.  Assume that $g$ is not adjacent  to $c$ and $h$ is not adjacent to
$d$ (otherwise case 2.1).

Suppose first  that there  exists $k\in  B_3$ adjacent  to $d$  and $c$.
Observe  that  $k$ is  not  adjacent  to  $b$  (case 1  considering  the
$K_3=\{b,c,k\}$) and $k$ is  not adjacent to $a$ and to  $h$ at the same
time (case 2.1 considering the $C_4=\{b,c,k,a\}$). Depending on the edge
$ak$,  one  of  $\{b,c,k\}$  or $\{a,h,k\}$  along  with  $\{a,b,c,g\}$,
$\{a,b,c,d\}$, $\{a,b,d,h\}$  are contained  in four  different mutually
intersecting bicliques.

\begin{figure}[ht!]    
	\centering
	\includegraphics[scale=.4]{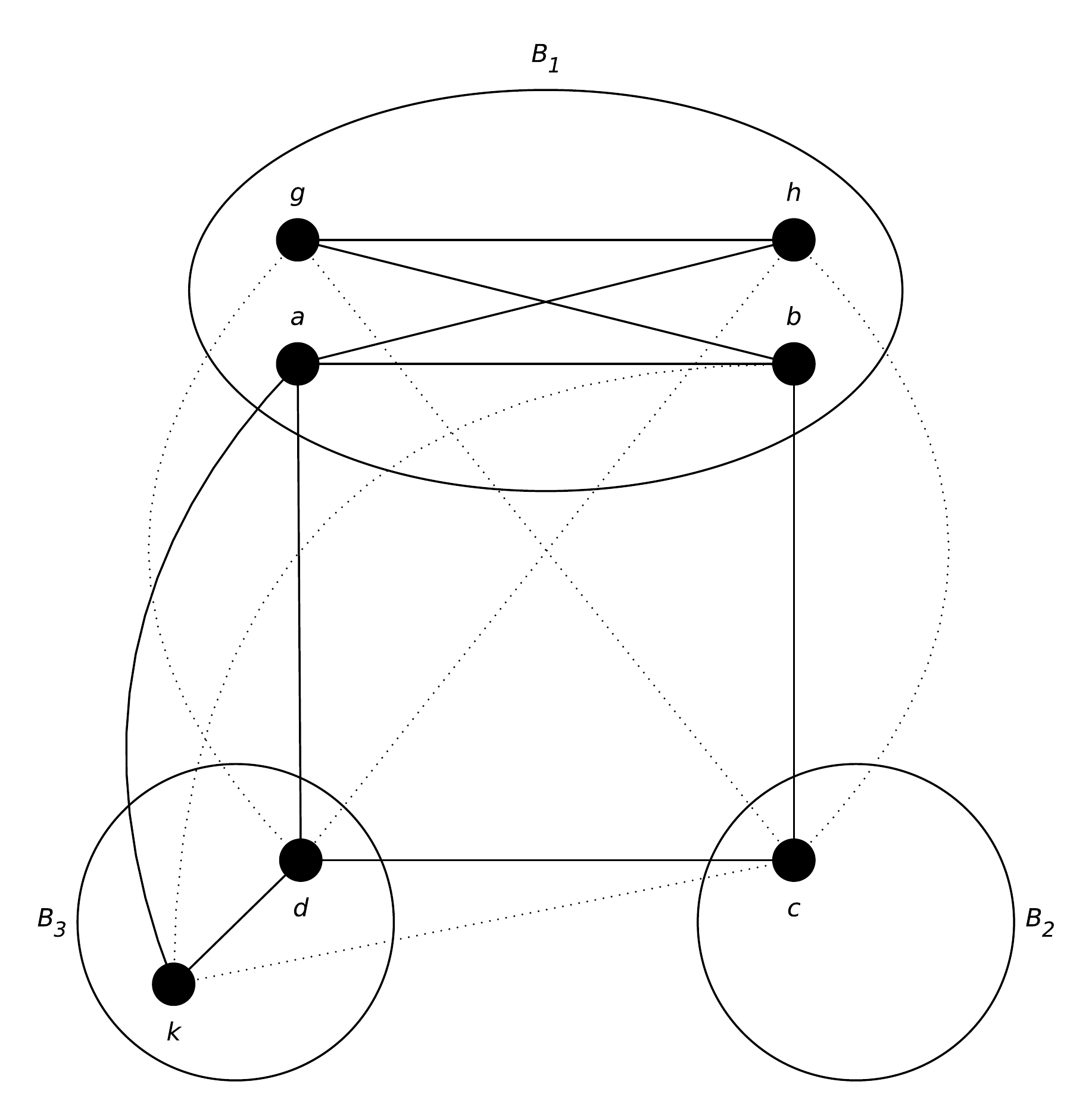}
	\caption{Case 2.2, with edge $ak$ and without edge $bk$}
	\label{case2-2a}
\end{figure}

Suppose therefore that every $k \in B_3$ adjacent to $d$ is not adjacent
to $c$.  If $k$ is not adjacent to $b$ or $k$ is  adjacent to $a$, then
$\{a,b,c,g\}$,  $\{a,b,c,d\}$,  $\{a,b,d,h\}$  and   $\{c,d,k\}$   are
contained  in  four  different  mutually  intersecting  bicliques.   See
Figure~\ref{case2-2a}. Otherwise, $k$ is adjacent  to $b$ and not  adjacent to $a$.
Consider the $C_4=\{k,d,c,b\}$, where the edge $kd$ is contained in
$B_3$, vertex $c\in B_2$ and $b\in B_1$.  Now, following the same
arguments as above, considering vertex $a$ as $k$, vertex $b$ as $d$, and
vertex $d$ as $b$, since the vertex $g\in B_1$ (that is adjacent to
$b$ and not adjacent to $d$ and $c$) has the same ``role'' as the vertex $k$, we arrive exactly to the previous case (when $k$ is not adjacent to 
$b$ or $k$ is adjacent to $a$, Figure~\ref{case2-2a}).
Therefore $\{k,d,c,g'\}$, $\{k,d,c,b\}$, $\{k,d,b,h'\}$ and $\{c,b,g\}$ are contained in four different mutually intersecting bicliques, 
where $h'\in B_3$ is adjacent to  $k$ and not adjacent to $c$ nor to $b$, and $g'\in B_3$ is
adjacent   to   $d$ and  not   adjacent   to   $d$  nor  to   $c$. See Figure~\ref{case2-2b}.

\begin{figure}[ht!]    
	\centering
	\includegraphics[scale=.4]{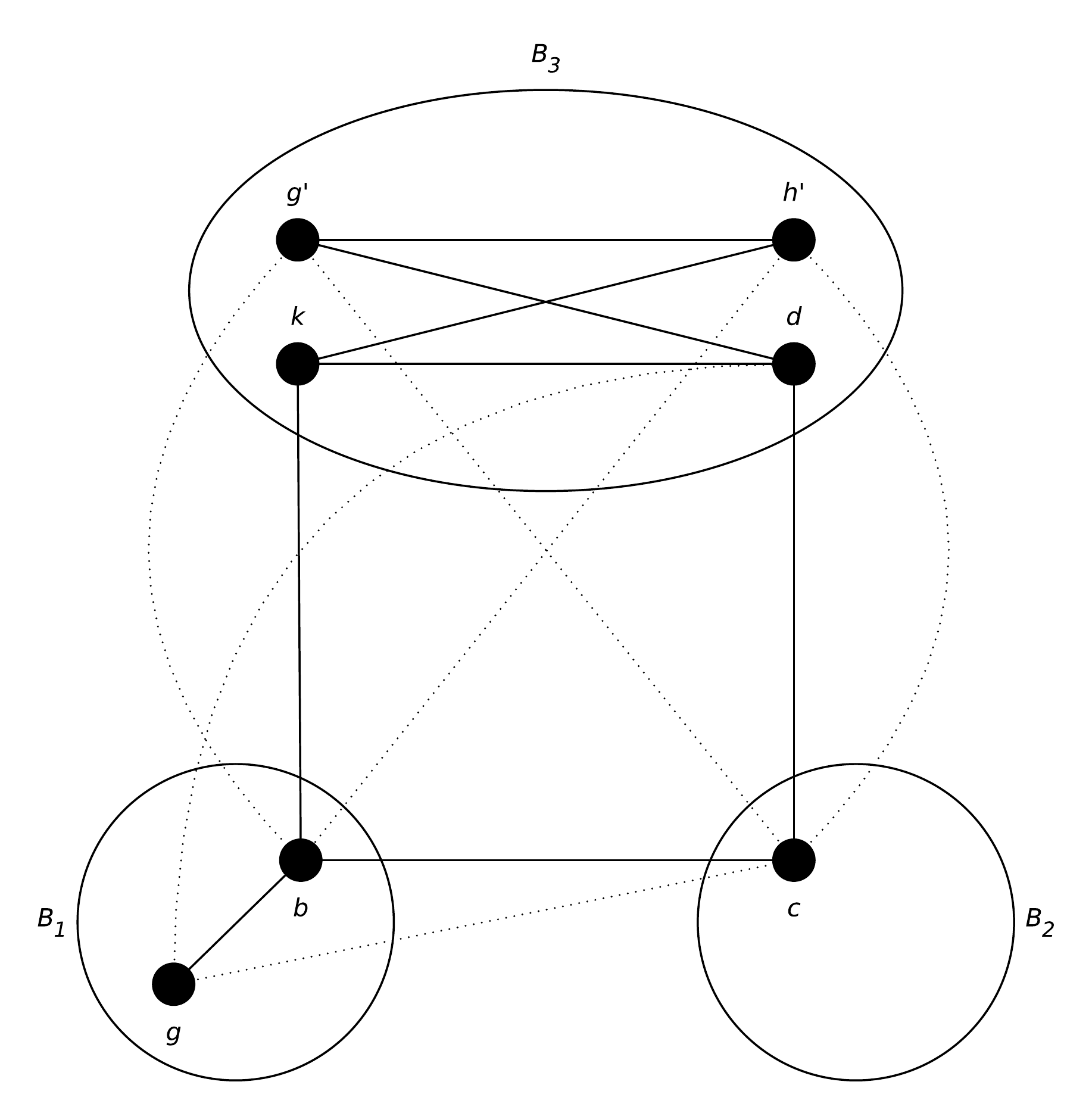}
	\caption{Case 2.2, with $C_4=\{k,d,c,b\}$}
	\label{case2-2b}
\end{figure}

We covered all the cases when a $C_4$ is in $H$ with all of the vertices
in the bicliques $B_1$, $B_2$ and $B_3$.

\textbf{Case 3}: There is an induced $C_k$, $5 \leq k \leq 9$ in $H$ with at least one vertex from each
biclique $B_1$, $B_2$ and $B_3$. For the case $k=5$ there is nothing to do.
Finally, for $6 \leq k \leq 9$, it is easy to see that, as each biclique containing two consecutive edges of the $C_k$ has
to intersect $B_1$, $B_2$ and $B_3$, then we would obtain a smaller cycle and therefore this case cannot occur.

Since we covered all cases the proof is done.

\end{proof}

Next, we present the main theorem of this section. This theorem shows that almost every graph is divergent under the biclique operator.
We remark that the linear time algorithm for recognizing convergent or divergent graphs given later in this section is based on this theorem.

\begin{theorem}
Let $G$ be a graph. If $G$ has at least $7$ bicliques, then $G$ diverges under the biclique operator.
\end{theorem}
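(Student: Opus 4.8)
The plan is to reduce the claim to finding, inside $G$, one of the "divergence-certifying" induced subgraphs $K_5$, $C_5$, $\mathit{butterfly}$, $\mathit{gem}$ or $\mathit{rocket}$, and then invoke Theorem~\ref{divergencia} together with the fact that $K_5 \subseteq KB(H)$ for each such $H$. So assume $G$ has at least $7$ bicliques. Since adding or deleting false-twins does not change the $KB$-behaviour, I may as well work with $G' = Tw(G)$, but in fact it is cleaner to look directly at $H := KB(G)$, which has at least $7$ vertices. If $H$ contains $K_5$ or the $\mathit{gem}$ or the $\mathit{rocket}$ as an induced subgraph we are done immediately by Theorem~\ref{divergencia}. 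Otherwise, by Corollary~\ref{contraccion}, if $G$ were convergent then $Tw(H)$ would have at most $4$ vertices; since $H$ has at least $7$ vertices, $H$ must contain a nontrivial false-twin class, and moreover $Tw(H)$ has at least... actually the key quantitative point is that $H$ has at least $7$ vertices while $Tw(H)$ has at most $4$, so some false-twin class of $H$ has size at least $2$, and by a pigeonhole/counting argument combined with the structure of $Tw(H)=K_n$ ($n\le 4$) we get that $H$ contains three pairwise false-twin vertices, or two disjoint pairs, or one pair plus enough extra structure.

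The main structural step is to show that $H$ contains three mutually false-twin vertices $b_1,b_2,b_3$ — or at least a configuration reducible to the hypotheses of Lemma~\ref{2gemelos} or Lemma~\ref{3gemelos}. For this I would argue: if every false-twin class of $H$ had size at most $2$ and there were at most two such classes, then $|V(H)| \le 4 + 2 + 2 = 8$... so this bound alone is not quite enough and I need $7$ to be replaced by a more careful count. The correct bookkeeping: $Tw(H) = K_n$ with $n \le 4$, and $|V(H)| = \sum_{i=1}^n |Z_i|$ where $Z_i$ are the false-twin classes. If $|V(H)| \ge 7$ and $n \le 4$, then either some $|Z_i| \ge 3$ (three pairwise false-twins), or at least two classes satisfy $|Z_i| \ge 2$; in the latter case pick one false-twin pair from each of two distinct classes. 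Whichever configuration arises, I apply the appropriate lemma: three false-twins go to Lemma~\ref{3gemelos} (after observing that if some pair among $B_1,B_2,B_3$ has no edge between them, Lemma~\ref{2gemelos} already finishes); a single pair with no edges between their bicliques goes to Lemma~\ref{2gemelos}; and in each case the conclusion is that $H$ — equivalently $KB(G)$ — contains $K_5$ as an induced subgraph, so $G$ diverges by Theorem~\ref{divergencia}.

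Let me restate the counting more carefully since it is the crux. With $n \le 4$ classes and total size $\ge 7$: if all classes had size $\le 2$ we would need at least $\lceil 7/2 \rceil = 4$ classes, so $n = 4$ and the sizes are $(2,2,2,1)$ or $(2,2,2,2)$, giving at least three classes of size $\ge 2$; if some class has size $\ge 3$ we have three pairwise false-twins outright. In the first subcase, among the three size-$\ge 2$ classes, if for every pair of the three corresponding bicliques there is a crossing edge, Lemma~\ref{3gemelos} applies (taking $b_1,b_2,b_3$ one representative from three distinct nontrivial classes — note these are false-twins only within their own class, so I actually need three false-twins from a single class for Lemma~\ref{3gemelos}); otherwise some pair of bicliques has no crossing edge and Lemma~\ref{2gemelos} applies. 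So the genuinely safe route is: either a single class has $\ge 3$ elements, in which case I have three honest pairwise-false-twin vertices and run Lemma~\ref{2gemelos}/Lemma~\ref{3gemelos}; or I reduce to two disjoint false-twin pairs and must supply a separate short argument (likely a direct small case analysis, or invoking Lemma~\ref{2gemelos} twice) producing $K_5$.

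The hard part will be exactly this last point — ensuring that "seven bicliques" really does force a configuration covered by Lemmas~\ref{2gemelos}–\ref{3gemelos}, i.e. that a class of size $\ge 3$ must appear (or else handling the two-pairs case by hand). If the clean statement "$|V(H)|\ge 7$ with $Tw(H)=K_n$, $n\le 4$, forces some $|Z_i|\ge 3$" fails — and it does, cf. sizes $(2,2,2,1)$ summing to $7$ — then I expect the proof to need the two-disjoint-pairs case treated explicitly, which is the main obstacle; everything else (applying the two lemmas, quoting Theorem~\ref{divergencia}) is routine. I would therefore structure the proof as: (1) pass to $H=KB(G)$ with $|V(H)|\ge 7$; (2) if $H$ has $K_5$, $\mathit{gem}$ or $\mathit{rocket}$ induced, done; (3) else use Corollary~\ref{contraccion} to get $Tw(H)=K_n$, $n\le 4$; (4) count to locate either a false-twin triple or two disjoint false-twin pairs; (5) apply Lemma~\ref{3gemelos} or Lemma~\ref{2gemelos} (with the edge/no-edge dichotomy) to produce an induced $K_5$ in $H=KB(G)$; (6) conclude divergence via Theorem~\ref{divergencia}.
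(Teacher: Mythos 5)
There is a genuine gap, and you have in fact located it yourself without closing it: the case in which no false-twin class of $KB(G)$ has size at least $3$. Your counting is right --- with $Tw(KB(G))=K_n$, $n\le 4$, and $|V(KB(G))|\ge 7$, either some class has $\ge 3$ vertices (and then Lemmas~\ref{2gemelos}/\ref{3gemelos} finish, exactly as in the paper's cases $n=3$ and the first half of $n=4$), or else $n=4$ and the class sizes are $(2,2,2,1)$ or $(2,2,2,2)$. But your proposed escape routes for that residual case do not work: Lemma~\ref{3gemelos} genuinely needs three mutually false-twin vertices, i.e.\ three vertices from a \emph{single} class (representatives of distinct classes are adjacent in $K_4$, hence not false twins of each other), and Lemma~\ref{2gemelos} only fires when the two bicliques of a false-twin pair have \emph{no} crossing edge --- if each of the size-$2$ classes has a crossing edge between its two bicliques, neither lemma applies, and ``invoking Lemma~\ref{2gemelos} twice'' gives nothing. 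So the ``separate short argument'' you defer is not a routine afterthought; it is an irreducible case.

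The paper resolves it by a completely different device: when all classes have size $\le 2$, the graph $KB(G)$ is forced to be one of exactly two graphs on $7$ or $8$ vertices (the complete multipartite graphs with parts $(2,2,2,1)$ and $(2,2,2,2)$), and these are shown \emph{not to be biclique graphs at all}, by inspection against the characterization of biclique graphs in~\cite{GroshausSzwarcfiterJGT2010}; hence the case simply cannot occur. Nothing in your toolkit (the two lemmas plus Theorem~\ref{divergencia} and Corollary~\ref{contraccion}) produces a $K_5$ here, so without importing that characterization --- or some substitute argument for $K_{2,2,2,1}$ and $K_{2,2,2,2}$ --- your proof is incomplete. The remaining skeleton of your argument (cases $n\le 2$, the pigeonhole for $n=3$, and the reduction to the two lemmas when a class of size $\ge 3$ exists) matches the paper's proof.
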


\begin{proof} By way of contradiction, suppose that $G$ has at least $7$
bicliques and $G$ converges under the biclique operator. By
Corollary~\ref{contraccion}, $Tw(KB(G))=K_{n}$ for $n=1,...,4$.
Consider the following cases.

\textbf{Case $n=1$}. Then $KB(G)=K_1$ is a contradiction since $G$ has at least $7$
bicliques.

\textbf{Case $n=2$}. In~\cite{GroshausSzwarcfiterJGT2010} it was proved that no bipartite graph with more than two vertices is a biclique graph. Then $KB(G)=K_2$ what means that $G$ has only $2$ bicliques and therefore a contradiction.  

\textbf{Case $n=3$}. Since $G$ has at least $7$ bicliques it follows that in $KB(G)$
there exists a set of false-twin vertices of size at least three.  Consider the
bicliques $B_1,B_2,B_3$ of $G$ associated to the three false-twin vertices. If there
is a pair of bicliques $B_i,B_j$ such that there is no edge between any vertex
of $B_i$ and any vertex of $B_j$, by Lemma~\ref{2gemelos} it follows that $K_5$
is an induced subgraph of $KB(G)$. Otherwise, for every two pair of bicliques
$B_i, B_j$ there is an edge between some vertex of $B_i$ and some vertex of
$B_j$ and by Lemma~\ref{3gemelos} $KB(G)$ contains $K_5$ as an induced
subgraph. In any case, by Theorem~\ref{divergencia} $G$ diverges under the
biclique operator, a contradiction. 

\textbf{Case $n=4$}. There are two alternatives. Suppose that $KB(G)$ has a set of false-twin
vertices of size at least three. Then following the proof of the case $n=3$ we
arrive to a contradiction. Otherwise, there are only two possible graphs
isomorphic to $KB(G)$ ($KB(G)$ has $7$ or $8$ vertices and it has no set of
three false-twin vertices, see Fig.~\ref{Fig_7_8_vertices}). By inspection, using the characterization of biclique graphs given in~\cite{GroshausSzwarcfiterJGT2010},  
we prove that these two graphs are not biclique graphs. We conclude that this case cannot occur.

\begin{figure}[ht!]
	\centering
	\includegraphics[trim=0 100 0 100, clip, scale=.3]{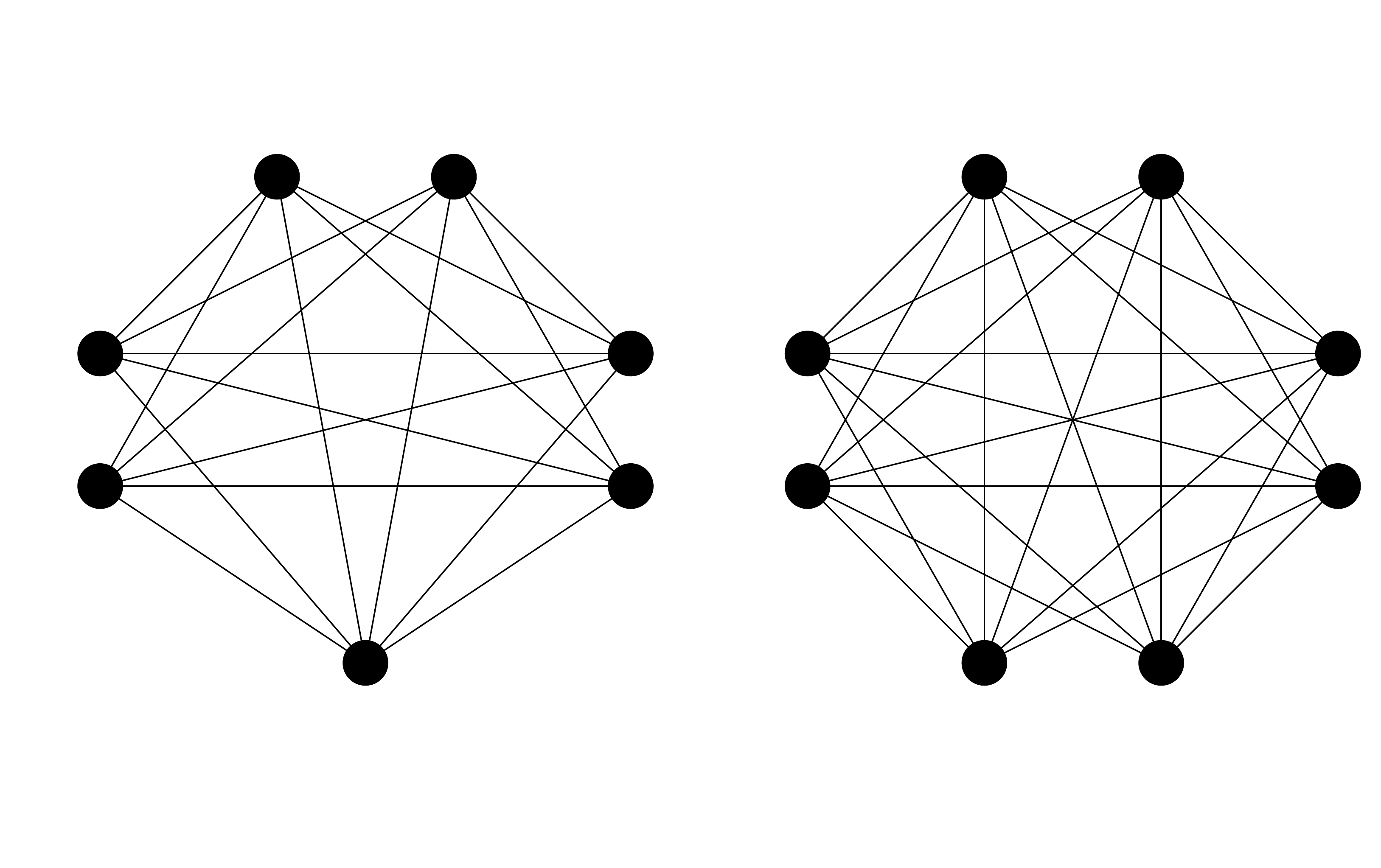}
	\caption{Unique two possible graphs for case $n=4$.}
	\label{Fig_7_8_vertices}
\end{figure}

Since we covered all cases, $G$ diverges under the biclique operator and the
proof is finished.
\end{proof}

The next step is to study graphs without false-twin vertices with at least $7$ bicliques.  This will complete the idea of  
the linear time algorithm for  recognizing divergent and convergent graphs under the biclique operator.

\begin{theorem}\label{13}
Let $G$ be a false-twin-free graph. If $G$ has at least $13$ vertices then $G$ has at least $7$ bicliques.
\end{theorem}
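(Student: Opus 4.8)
The plan is to prove the contrapositive: a connected false-twin-free graph $G$ with at most $6$ bicliques has at most $12$ vertices. In fact I would aim for the sharp inequality $n \le 2b$, where $b$ denotes the number of bicliques of $G$; since $n \ge 13 > 2\cdot 6$ then forces $b \ge 7$, this settles the theorem. (The bound $n \le 2b$ is tight: $K_2$, $P_4$, and suitable ``star plus distinguishers'' graphs attain it for $b=1,2,3,\dots$.)

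The argument rests on three structural facts about bicliques in false-twin-free graphs. First, since $G$ is connected on at least two vertices, every vertex lies in at least one biclique (every edge extends to one). Second, at most two vertices can have the same \emph{biclique list} $\mathcal{B}(v):=\{B : v\in B\}$: if $u,v$ are non-adjacent with $\mathcal{B}(u)=\mathcal{B}(v)$, then for each $w\in N(u)$ the edge $uw$ lies in some $B\in\mathcal{B}(u)=\mathcal{B}(v)$, and since $v$ is non-adjacent to $u$ it lies on $u$'s side of $B$, so $w\in N(v)$; thus $N(u)=N(v)$, a contradiction; and three vertices with a common biclique list would be pairwise adjacent (any non-adjacent pair among them would be false twins) while all lying in a single triangle-free biclique, which is impossible. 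Third, for a biclique $B=(X,Y)$, at most one vertex of $X$ (and one of $Y$) can be \emph{private} to $B$, i.e.\ belong to no other biclique: a private vertex $x\in X$ has all its edges inside $B$, hence $N(x)=Y$, so two private vertices of $X$ would be false twins.

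With these in hand, the target bound $n\le 2b$ is exactly the assertion that one can choose for each vertex $v$ a biclique $g(v)\ni v$ so that no biclique is chosen more than twice; by the defect form of Hall's theorem this is equivalent to: for every set $\mathcal{S}$ of bicliques, the number of vertices all of whose bicliques lie in $\mathcal{S}$ is at most $2|\mathcal{S}|$ — and the case $|\mathcal{S}|=1$ is precisely the third fact above. If a direct proof of this Hall condition is not available, the robust alternative — and, I expect, the route actually taken in full detail — is a case analysis on $b\in\{1,\dots,6\}$: for each value of $b$ one lists the possible patterns of pairwise intersections of the (exactly) $b$ bicliques, and uses the false-twin-free condition together with the constraint that no extra maximal biclique may appear to pin down the possible graphs and verify $n\le 2b$. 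The case $b=1$ is immediate ($G$ is complete bipartite, hence $K_1$ or $K_2$), $b=2$ quickly forces $n\le 4$, and the remaining cases build on these. The main obstacle I anticipate is the bookkeeping for $b=5,6$: a single biclique may be arbitrarily large (a star $K_{1,k}$ can be a biclique), so one must show that a large biclique drags in correspondingly many distinguishing vertices which themselves sit in further bicliques — the ``conservation'' that prevents $n$ from exceeding $2b$ — while at the same time keeping the overlaps tame enough that no spurious biclique is created.
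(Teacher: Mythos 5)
Your three structural facts are correct and cleanly argued (every vertex lies in a biclique; at most two vertices share a biclique list, and then they are adjacent; each side of a biclique has at most one private vertex). But the proof has a genuine gap: none of these facts, nor all of them together, yields the bound $n \le 2b$ that your argument hinges on. The Hall/defect condition you reduce to — every set $\mathcal{S}$ of bicliques ``covers'' at most $2|\mathcal{S}|$ vertices exclusively — is verified only for $|\mathcal{S}|=1$. Already for $|\mathcal{S}|=2$ your facts give only the bound $6$ (two vertices private to $B_1$, two private to $B_2$, two with list $\{B_1,B_2\}$), not the required $4$; in general they give only $n \le 2(2^b-1)$, which is useless here. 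You acknowledge this yourself and fall back on ``a case analysis on $b\in\{1,\dots,6\}$'' whose hardest cases ($b=5,6$) you explicitly leave as an anticipated obstacle. So what remains is a plan, not a proof: the central combinatorial step — showing that $b\le 6$ forces $n\le 12$ for connected false-twin-free graphs — is exactly the part that is missing.

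For contrast, the paper does not attempt any such direct structural count. It proceeds by induction on $n$: the base case $n=13$ is settled by exhaustive inspection of false-twin-free graphs, and the inductive step invokes Sumner's theorem that every false-twin-free graph has a vertex $v$ whose deletion leaves a false-twin-free graph. If $G-v$ is connected, induction applies; if not, the paper uses (again by inspection, for components on at most $12$ vertices) that each nontrivial component $G_i$ has at least $\lceil n_i/2\rceil$ bicliques, and adds the star biclique through $v$ that is destroyed by its removal. Interestingly, that $\lceil n_i/2\rceil$ bound is exactly your conjectured $n\le 2b$, but the paper only ever needs it for graphs on at most $12$ vertices, where it is checked computationally rather than proved structurally. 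If you could actually establish the Hall condition (or $n\le 2b$) for all connected false-twin-free graphs, you would have a more informative and inspection-free proof than the paper's; as written, you have not.
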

\begin{proof}
We prove the result by induction on $n$. For $n=13$, by inspection of all graphs without false-twin vertices the result holds.
Suppose now that $n\geq 14$. 
Theorem $2$ in~\cite{Sumner} states that if a graph $G$ has no false-twin vertices, then there exists a vertex $v$ such that $G-\{v\}$ is also 
false-twin free. Consider such a vertex $v$ and let $G' = G-\{v\}$. If $G'$ is connected, since it has at least $13$ vertices, by the inductive hypothesis it has at least $7$ bicliques.
Now as $G'$ is an induced subgraph of $G$ we conclude that $G$ also has at least $7$ bicliques. Suppose now that $G'$ is not connected.
Let $G_1,G_2,\ldots,G_s$ be the connected components of $G'$ on $n_1,n_2,\ldots,n_s$ vertices respectively. Since $G$ has no 
false-twin vertices, it can be at most one $G_i$ such that $n_i=1$. If there is one component with at least $13$ vertices, then by the inductive 
hypothesis this component has at least $7$ bicliques and so does $G$. Therefore every component has at most $12$ vertices. Now, by inspection we can 
verify that every component $G_i$ (but maybe one with just $1$ vertex) has at least $\lceil \frac{n_i}{2} \rceil$ bicliques. Also, since $G'$ is 
disconnected, $v$ along with at least one vertex of each of the $s$ components is a biclique in $G$ isomorphic to $K_{1,s}$ that is lost in $G'$. 
Summing up and assuming the worst case, that is, there exists one $n_i = 1$ (suppose $i=s$) we obtain that the number of bicliques of $G$ is at least 
\begin{displaymath}
\bigg(\sum\limits_{i=1}^{s-1}\bigg\lceil \frac{n_i}{2} \bigg\rceil\bigg) + 1 \geq
\Big\lceil \frac{11}{2} \Big\rceil + 1 = 7
\end{displaymath}
as we wanted to prove. Now the proof is complete.
\end{proof}

Theorem~\ref{13} implies that the number of convergent graphs without false-twin vertices is finite since convergent graphs without false-twin 
vertices have at most $12$ vertices. This fact leads to the following linear time algorithm. 

\textbf{Algorithm}: Given a graph $G$, build $H=Tw(G)$. If $H$ has at least $13$ vertices, answer ``$G$ diverges'' and STOP.
Otherwise, build $Tw(KB(H))$. If $Tw(KB(H))$ has at most $4$ vertices answer ``$G$ converges'' and STOP. Otherwise, answer ``$G$ diverges'' and STOP.

The algorithm has $O(n+m)$ time complexity. For this observe that $H$ can be built in $O(n+m)$ time using the modular decomposition~\cite{modulardecomp}. 
Finally, if $H$ has at most $12$ vertices any further operation takes $O(1)$ time complexity.

\section{Conclusions}
In~\cite{marinayo} it is given an $O(n^4)$ time algorithm to recognize convergent and divergent graphs under the biclique operator. 
In this paper we prove that graphs without false-twin vertices with at least $13$ vertices diverge. This shows that ``almost every'' graph is 
divergent and as a direct consequence, we obtain a linear time algorithm for recognizing the behavior of a graph under the biclique operator. 
We remark that in contrast as the iterated clique operator, no polynomial time algorithm is known for recognizing any of its possible behaviors. 

\bibliography{biblio2}

\end{document}